\documentclass[11pt]{amsart}
\usepackage[letterpaper,margin=1.08in]{geometry}
\usepackage{amssymb,mathtools}
\usepackage{microtype}
\numberwithin{equation}{section}
\renewcommand{\theequation}{\arabic{section}.\arabic{equation}}
\usepackage{enumitem}
\usepackage{cite}
\usepackage[svgnames]{xcolor}
\usepackage{hyperref}
\ifdefined\pdfinfoomitdate
\fi
\ifdefined\pdftrailerid
  \pdftrailerid{}
\fi
\ifdefined\pdfglyphtounicode
  \pdfglyphtounicode{parenleftbig}{0028}
  \pdfglyphtounicode{parenrightbig}{0029}
  \pdfglyphtounicode{bracketleftbig}{005B}
  \pdfglyphtounicode{bracketrightbig}{005D}
  \pdfglyphtounicode{parenleftBig}{0028}
  \pdfglyphtounicode{parenrightBig}{0029}
  \pdfglyphtounicode{parenleftbigg}{0028}
  \pdfglyphtounicode{parenrightbigg}{0029}
  \pdfglyphtounicode{bracketleftbigg}{005B}
  \pdfglyphtounicode{bracketrightbigg}{005D}
\fi
\hypersetup{
 colorlinks=true,allcolors=Navy,breaklinks=true,
 pdftitle={The Support-Distribution Conditions for MSRD Codes Reduce to a Single Inequality},
 pdfauthor={Xiang Fan},
 pdfsubject={Nonnegativity of formal support counts of MSRD codes},
 pdfkeywords={Elementary symmetric polynomials, maximum sum-rank distance codes, support distribution}
}

\newtheorem{theorem}{Theorem}[section]
\newtheorem{lemma}[theorem]{Lemma}
\newtheorem{proposition}[theorem]{Proposition}
\newtheorem{corollary}[theorem]{Corollary}
\theoremstyle{remark}
\newtheorem{example}[theorem]{Example}

\title[Support-distribution conditions for MSRD codes]{The Support-Distribution Conditions for MSRD Codes Reduce to a Single Inequality}
\author{Xiang Fan}
\address{School of Mathematics, Sun Yat-sen University, Guangzhou 510275, China}
\email{fanx8@mail.sysu.edu.cn}
\keywords{Elementary symmetric polynomials, maximum sum-rank distance codes, support distribution}

\begin{document}
\begin{abstract}
Byrne, Gluesing-Luerssen, and Ravagnani determined the number of
codewords with a prescribed nonzero support in a linear maximum
sum-rank distance code whose blocks have a common number of columns.
Their formula depends only on the code parameters and the support
dimensions. Evaluating it for arbitrary parameters gives a family of
necessary nonnegativity conditions for existence. We prove their
conjecture that this family is equivalent to the condition at the vector
of total dimension $d+1$ obtained by filling the longest blocks first,
where $d$ is the distance parameter. The resulting explicit criterion
applies over every finite field $\mathbb F_q$, for all row numbers
$n_i\leqslant m$ and $1\leqslant d<N$, where $m$ is the common number of
columns and $N$ is the total number of rows. It also covers the boundary
cases in which this vector fills each of its nonzero blocks completely,
not explicitly included in the conjecture's displayed parameter range.
The proof uses a generating function and an exchange identity to reduce
the problem to vectors obtained by filling the longest blocks first. We
then estimate alternating sums of elementary symmetric polynomials,
with a separate argument for $q=2$. The dual conditions admit the same reduction. For
codes that exist, the joint condition follows from sphere-packing bounds
for the code and its dual. An infinite family of parameters satisfies
the joint condition but admits no such linear code.
\end{abstract}
\maketitle

\section{Introduction}
\label{sec:intro}

Sum-rank-metric codes contain Hamming-metric and rank-metric codes as
special cases, and maximum sum-rank distance (MSRD) codes, which attain
the Singleton bound in this metric, are the sum-rank analogue of maximum
distance separable (MDS) codes; see \cite{MPSK2022} for a survey of constructions, decoding, and
applications. The weight distribution of an MDS code is determined by
its length, its dimension, and the field size
\cite[Ch.~11, Sec.~3, Theorem~6]{MS1977}. Byrne, Gluesing-Luerssen, and
Ravagnani \cite{BGR2021} proved an analogue for MSRD codes whose blocks
have a common number of columns: the number of codewords with a
prescribed nonzero support depends only on the parameters of the code
and on the dimensions of the support. Their formula can be evaluated
for arbitrary parameters, whether or not a code with these parameters
exists, and the nonnegativity of its values is a family of necessary
conditions for existence. They conjectured that the whole family is
equivalent to a single one of its inequalities
\cite[Conjecture~VI.8]{BGR2021}. We prove this conjecture. The argument
also covers some boundary values of the parameters that are not
explicitly included in its statement.

\subsection{Sum-Rank-Metric Codes}

Let $q$ be a prime power, and let $t,n_1,\ldots,n_t,m_1,\ldots,m_t$
be positive integers with $n_i\leqslant m_i$ for all
$1\leqslant i\leqslant t$. The ambient space is
$\Pi:=\bigoplus_{i=1}^t\mathbb F_q^{n_i\times m_i}$, whose elements are
tuples $X=(X_i)_{i=1}^t$ of matrices over $\mathbb F_q$. We call the
$i$-th summand, which has $n_i$ rows and $m_i$ columns, the $i$-th
\emph{block}, and $X_i$ the $i$-th block of $X$. The
\emph{sum-rank} of $X$ is
$\operatorname{srk}(X):=\sum_i\operatorname{rk}(X_i)$, where
$\operatorname{rk}$ is matrix rank, and the sum-rank distance is
$(X,Y)\mapsto\operatorname{srk}(X-Y)$. For $t=1$ this is the rank
metric, and for $n_i=m_i=1$ it is the Hamming metric.

A nonzero $\mathbb F_q$-linear code $C\subseteq\Pi$ has minimum
distance $d:=\min\{\operatorname{srk}(X):X\in C\setminus\{0\}\}$.
Deleting any $d-1$ rows across the blocks gives an injective map on
$C$, which yields the \emph{sum-rank Singleton bound}
\cite[Theorem~III.2]{BGR2021}
\begin{equation}\label{eq:singleton-general}
\dim_{\mathbb F_q}C\leqslant
\min_{\substack{a_i\in\{0,\ldots,n_i\}\\\sum_i a_i=d-1}}
\sum_{i=1}^t m_i(n_i-a_i).
\end{equation}
A nonzero $\mathbb F_q$-linear code attaining equality is an
\emph{MSRD code}. This is the linear case
of the definition \cite[Definition~III.3]{BGR2021}, which also allows
nonlinear codes. The terminology and early constructions, using linearized
Reed--Solomon codes, are due to Mart\'inez-Pe\~nas \cite{MP2018}. Those
constructions are linear over an extension field; here we only assume
$\mathbb F_q$-linearity. When every $m_i=m$, the right-hand side
of~\eqref{eq:singleton-general} equals $m(N-d+1)$, where
$N:=\sum_i n_i$.

\subsection{Support Distributions and the Conjecture}

The support of a codeword is the tuple of column spaces of its
blocks, and the \emph{support distribution} of a code records how many
codewords have each prescribed support
\cite[Definitions~II.4 and~II.8]{BGR2021}; summing over the supports of
a given total dimension recovers the weight distribution. When the
blocks have a common number of columns, the number of codewords with a
fixed nonzero support in a nonzero linear MSRD code is given by an
explicit formula \cite[Theorem~VI.5]{BGR2021}. We call the values of this
formula \emph{formal support counts}. They can
be negative for parameters that admit no code, and their nonnegativity
is a necessary condition for the existence of such a code
\cite[Corollary~VI.6]{BGR2021}.

For blocks with $n_1,\ldots,n_t$ rows, this condition consists of
$\prod_i(n_i+1)-1$ inequalities, one for each nonzero
support-dimension vector, although those of total dimension at most
$d$ hold automatically. The conjecture
\cite[Conjecture~VI.8]{BGR2021} asserts that one of them suffices,
namely the inequality at the support-dimension vector of total
dimension $d+1$ obtained by filling the longest blocks first.

\subsection{Main Result}

Order the blocks so that $m\geqslant n_1\geqslant\cdots\geqslant n_t$,
and let $1\leqslant d<N$. Filling the blocks in this order with $d+1$
units gives the vector
$\widetilde{\boldsymbol{u}}=(n_1,\ldots,n_{j-1},\delta,0,\ldots,0)$,
where $n_1+\cdots+n_{j-1}<d+1=n_1+\cdots+n_{j-1}+\delta$ and
$1\leqslant\delta\leqslant n_j$. By Theorem~\ref{thm:main}, all
formal support counts for the distance parameter $d$ are nonnegative if
and only if
\begin{equation}\label{eq:explicit-test}
 \sum_{i<j}(q^{n_i}-1)+(q^\delta-1)\leqslant(q-1)(q^m+1).
\end{equation}
This inequality is equivalent to the nonnegativity of the count at
$\widetilde{\boldsymbol{u}}$, so Theorem~\ref{thm:main} proves the
conjecture in the range in which it is stated; no code with the given
parameters is assumed to exist. The theorem also covers the values of
$d$ with $d+1=n_1+\cdots+n_j$ for some $j$, including $d+1=N$, which
are not explicitly covered by the parameter range displayed in
\cite[Conjecture~VI.8]{BGR2021}.

Write $\omega_d(\boldsymbol{u})$ for the formal support count attached to a
support-dimension vector $\boldsymbol{u}=(u_1,\ldots,u_t)$, where
$0\leqslant u_i\leqslant n_i$, and $\mathbf e_i$ for the $i$-th
standard basis vector. The proof starts from an expression of
$\omega_d(\boldsymbol{u})$ as a coefficient of a rational function
(Section~\ref{sec:reduction}). This expression yields the following
identity: if
$\lvert\boldsymbol{u}\rvert\geqslant d+1$ and a unit is moved from
coordinate $j$ of $\boldsymbol{u}$ to another coordinate $i$, where
$u_j\geqslant1$ and $u_i<n_i$, then
\[
 \omega_d(\boldsymbol{u}+\mathbf e_i-\mathbf e_j)
 -\omega_d(\boldsymbol{u})
 =-(q^{u_i}-q^{u_j-1})\,\omega_d(\boldsymbol{u}-\mathbf e_j).
\]
If $u_i\geqslant u_j$ and all counts of total dimension
$\lvert\boldsymbol{u}\rvert-1$ are nonnegative, the identity shows that
the move cannot increase the count. After sorting the coordinates,
repeated moves lead to the vector obtained by filling the longest
blocks first. Under the same nonnegativity hypothesis, this vector
therefore minimizes the count in its total dimension. Induction on the
total dimension reduces the theorem to these vectors.

After rescaling, their counts become alternating partial sums of
elementary symmetric polynomials in explicit numbers from $(0,1/q]$,
one for each filled row except the first. For $q\geqslant3$, all these
numbers are at most $1/3$, and a short induction proves nonnegativity
(Section~\ref{sec:positivity}). For $q=2$, a block with $m$ rows
contributes an entry $1/2$, so a separate argument is needed. The cases
$d\leqslant2$ require no hypothesis on~\eqref{eq:explicit-test}. For
$d\geqslant3$, Section~\ref{sec:binary} shows that, under that inequality,
an entry $1/2$ occurs beyond the first $d+1$ rows only when $n_1=m$ and
$d\leqslant m$, and treats these remaining cases.

The dual necessary conditions \cite[Corollary~VI.10]{BGR2021}
involve the same formula with another distance parameter, so
Theorem~\ref{thm:main} reduces them, and both families together, to a
single inequality (Section~\ref{sec:dual}). For linear MSRD codes that
exist, this joint condition already follows from the
projective sphere-packing bound \cite[Section~III]{BGR2021}, applied
to the code and its dual (Proposition~\ref{prop:sphere-comparison}).
The contribution of this paper is thus the equivalence for the formal
counts, not a stronger existence bound.
Example~\ref{ex:nonexistence} gives an infinite family of parameters for
which all primal and dual formal support counts are nonnegative but no
linear MSRD code exists.

Appendix~\ref{app:refinements} gives further properties of the formal
support counts. Under~\eqref{eq:explicit-test}, the counts of total
dimension at least $d+2$ are positive, with one exceptional set of
parameters. For $d\geqslant3$ and odd $r\geqslant3$, we give lower bounds
in total dimension $d+r$ and determine the equality cases for the
bound valid over every finite field. We also determine the zero counts
in total dimension $d+1$ and the minimizers in each total dimension at
least $d+2$. Without assuming~\eqref{eq:explicit-test}, we prove that
every count is a multiple of $q^m-1$ and establish congruences modulo $q$,
and modulo $q-1$ when $d\geqslant2$. Finally, we derive arithmetic
restrictions on equality in~\eqref{eq:explicit-test}.

\subsection{Related Work}

Mart\'inez-Pe\~nas developed the theory of support spaces, restriction,
and shortening for sum-rank-metric codes \cite{MP2019}, and
Camps-Moreno et al.\ studied generalized sum-rank weights of MSRD codes
\cite{CGL2022}. Xia and Fu derived support distributions for dually
quasi-MSRD codes \cite{XiaFu2026}, and Bartoli, Grimaldi, and
Longobardi constructed nonadditive MSRD codes with the same sum-rank
weight distribution as $\mathbb F_{q^m}$-linear MSRD codes with the same
parameters \cite{BGL2026}. Bounds on the sizes of sum-rank-metric codes
have been obtained from eigenvalues
\cite{AbiadMongeKhramovaRavagnani2024} and from linear programming
\cite{AbiadGavrilyukKhramovaPonomarenko2025}. The present paper
concerns instead the formal support counts themselves, and
its main theorem does not presuppose that a code with the prescribed
parameters exists.

\section{Formal Support Counts and the Main Theorem}
\label{sec:setup}

We follow the notation of Byrne et al.\ \cite[Section~VI]{BGR2021}. Throughout,
$q$ is a prime power and
\[
\begin{aligned}
&m\geqslant n_1\geqslant n_2\geqslant\cdots\geqslant n_t\geqslant1,\\
&N=\sum_{i=1}^t n_i,\qquad 1\leqslant d<N.
\end{aligned}
\]
The integer $d$ is a prescribed distance parameter; we do not assume
that a code with that minimum distance exists. All blocks have the same
number $m$ of columns; the $i$-th block has $n_i$ rows, and we call
$n_i$ its \emph{length}. The ambient space is
\[
 \Pi=\Pi_q(n_1\times m\mid\cdots\mid n_t\times m)
      =\bigoplus_{i=1}^{t}\mathbb F_q^{n_i\times m}.
\]
The \emph{support} of $X=(X_i)_{i=1}^t\in\Pi$ is the tuple of column
spaces
\[
\sigma(X):=(\operatorname{colsp}(X_1),\ldots,
\operatorname{colsp}(X_t)),
\]
where $\operatorname{colsp}(X_i)$ is the subspace of
$\mathbb F_q^{n_i}$ spanned by the columns of $X_i$
\cite[Definition~II.4]{BGR2021}. For a tuple
$\boldsymbol{U}=(U_1,\ldots,U_t)$ of subspaces
$U_i\subseteq\mathbb F_q^{n_i}$ we put
$\dim\boldsymbol{U}:=(\dim U_1,\ldots,\dim U_t)$.
By a code we always mean an $\mathbb F_q$-linear subspace
$C\subseteq\Pi$. With every $m_i=m$,
\eqref{eq:singleton-general} shows that a nonzero code with minimum
distance $d$ is MSRD precisely when
$\dim_{\mathbb F_q}C=m(N-d+1)$. A block with $n_i=0$ supports only the zero
subspace and can be omitted, so the assumption $n_i\geqslant1$ loses no
generality relative to the conjecture \cite[Conjecture~VI.8]{BGR2021}.

Let $\mathcal I:=\mathcal I_{(n_1,\ldots,n_t)}$ be the set of
support-dimension vectors, that is, of
$\boldsymbol{u}=(u_1,\ldots,u_t)\in\mathbb Z^t$ with
$0\leqslant u_i\leqslant n_i$ for every $i$, and call
$\lvert \boldsymbol{u}\rvert:=\sum_i u_i$ the \emph{total dimension}
of $\boldsymbol{u}$. We write $\mathbf e_i\in\mathbb Z^t$ for the
$i$-th standard basis vector. For integers $a\geqslant0$ and
$0\leqslant b\leqslant a$, the \emph{$q$-binomial coefficient} is
\[
 \genfrac{[}{]}{0pt}{}{a}{b}_q
 :=\prod_{j=0}^{b-1}\frac{q^{a-j}-1}{q^{b-j}-1},
\]
where the empty product is $1$; we set
$\genfrac{[}{]}{0pt}{}{a}{b}_q=0$ if $b<0$ or $b>a$. For
$0\leqslant b\leqslant a$, this number counts the $b$-dimensional
subspaces of $\mathbb F_q^a$. For $\boldsymbol{u}\in \mathcal I$ define
\begin{align*}
 f_\ell(\boldsymbol{u})
   &:=\sum_{\substack{0\leqslant v_i\leqslant u_i\\\sum_i v_i=\ell}}
      \prod_{i=1}^t(-1)^{u_i-v_i}
      q^{\binom{u_i-v_i}{2}}\genfrac{[}{]}{0pt}{}{u_i}{v_i}_q,\\
 \omega_d(\boldsymbol{u})
   &:=\sum_{\ell=d}^{\lvert \boldsymbol{u}\rvert}(q^{m(\ell-d+1)}-1)f_\ell(\boldsymbol{u}),
\end{align*}
where an empty sum is zero. The numbers $\omega_d(\boldsymbol{u})$ are
the formal support counts. For fixed $d$, our
$\omega_d(\boldsymbol{u})$ is $\omega(\boldsymbol{u})$ in
\cite[Equation~(VI.2)]{BGR2021}; the subscript records the distance
parameter. If a nonzero linear MSRD code $C$ of minimum distance $d$
exists and $\boldsymbol{U}\ne(0,\ldots,0)$ is a specified support
tuple with $\dim\boldsymbol{U}=\boldsymbol{u}$, then
\[
 W_{\boldsymbol{U}}(C)=\omega_d(\boldsymbol{u})
   =\lvert\{X\in C:\sigma(X)=\boldsymbol{U}\}\rvert
\]
by \cite[Theorem~VI.5]{BGR2021}. This is the number of codewords whose
support is the \emph{specified} tuple $\boldsymbol{U}$, not the number
of codewords whose support has dimension vector $\boldsymbol{u}$. The
zero tuple must be excluded: exactly one codeword, namely $0$, has zero
support, whereas $\omega_d(\boldsymbol{0})=0$.

For each integer $k$ with $0\leqslant k\leqslant N$, define
$\boldsymbol{g}_k\in\mathcal I$ by
\[
 (\boldsymbol{g}_k)_i:=\min\!\left\{n_i,\,
       \max\!\left\{0,k-\sum_{j<i}n_j\right\}\right\}.
\]
Thus $\boldsymbol{g}_k$ is obtained by filling the coordinates in
order, the $i$-th up to $n_i$, until the total is $k$; in
particular, $\lvert \boldsymbol{g}_k\rvert=k$. We call
$\boldsymbol{g}_k$ the \emph{greedy vector} of total dimension $k$.
The vector $\widetilde{\boldsymbol{u}}:=\boldsymbol{g}_{d+1}$ is the
vector in the conjecture \cite[Conjecture~VI.8]{BGR2021} obtained by filling the
longest blocks first, with the definition extended to the case
$d+1=n_1+\cdots+n_j$. Finally, put
\[
 S(\boldsymbol{u}):=\sum_{i=1}^t\genfrac{[}{]}{0pt}{}{u_i}{1}_q
     =\sum_{i=1}^t\frac{q^{u_i}-1}{q-1}.
\]

\begin{theorem}\label{thm:main}
For every prime power $q$ and all parameters above, the following
conditions are equivalent:
\begin{enumerate}[label=\textup{(\roman*)}]
\item\label{it:all-nonnegative} $\omega_d(\boldsymbol{u})\geqslant0$ for every $\boldsymbol{u}\in \mathcal I$;
\item\label{it:single-test} $S(\widetilde{\boldsymbol{u}})\leqslant q^m+1$.
\end{enumerate}
\end{theorem}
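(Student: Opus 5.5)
The plan follows the outline in the introduction. First I would express $\omega_d(\boldsymbol{u})$ as a coefficient of a rational function. The inner sum $f_\ell(\boldsymbol{u})$ is the coefficient of $x^\ell$ in $\prod_i P_{u_i}(x)$, where $P_u(x)=\sum_v(-1)^{u-v}q^{\binom{u-v}2}\genfrac{[}{]}{0pt}{}{u}{v}_q x^v=\prod_{k=0}^{u-1}(x-q^k)$ by the $q$-binomial theorem. Hence
\[
\prod_i P_{u_i}(x)=\prod_i\prod_{k<u_i}(x-q^k),
\]
and $\omega_d(\boldsymbol{u})$ is a truncated linear functional of this polynomial, with weights $q^{m(\ell-d+1)}-1$ for $\ell\ge d$. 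Because $P_{u+1}=(x-q^u)P_u$, one can compare $\boldsymbol{u}+\mathbf e_i-\mathbf e_j$ with $\boldsymbol{u}$. Writing both as $P_{\boldsymbol{u}-\mathbf e_j}$ times $(x-q^{u_i})$, respectively $(x-q^{u_j-1})$, their difference is $-(q^{u_i}-q^{u_j-1})P_{\boldsymbol{u}-\mathbf e_j}$. Applying the functional to this difference gives the exchange identity of the introduction. The identity requires $|\boldsymbol{u}|\ge d+1$, so that truncation at $\ell\ge d$ commutes with multiplication by $x$ appropriately; I would check the boundary term $\ell=d-1$ carefully here.

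Next I would set up an induction on the total dimension $k=|\boldsymbol{u}|$. For $k\le d$, one has $\omega_d(\boldsymbol u)=(q^m-1)f_d(\boldsymbol u)$ or $0$. When $k=d$, $f_d=1$, so these counts are nonnegative automatically. Assume all counts of dimension $k-1\ge d$ are nonnegative. For $u_i\ge u_j$, the identity shows that moving a unit from a shorter coordinate $j$ to a longer one $i$ with room does not increase $\omega_d$. Sorting the coordinates and repeatedly moving units toward the earlier blocks, whose lengths $n_i$ are at least as large, reaches $\boldsymbol g_k$. Each step keeps $u_i\ge u_j$, which is where the ordering $n_1\ge\cdots\ge n_t$ is used. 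It follows that $\omega_d(\boldsymbol{g}_k)$ is minimal among vectors of dimension $k$. Hence (i) is equivalent to $\omega_d(\boldsymbol g_k)\ge0$ for all $d+1\le k\le N$.

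A direct computation at $k=d+1$ gives
\[
\omega_d(\boldsymbol{u})=(q^m-1)\bigl[(q^{m}+1)-S(\boldsymbol u)\bigr]\cdot(\text{positive}),
\]
up to normalization, since $f_{d+1}=1$ and $f_d=-\sum_i\sum_{k<u_i}q^k$. This settles (i)$\Rightarrow$(ii).

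For (ii)$\Rightarrow$(i), the task is to show $\omega_d(\boldsymbol g_k)\ge0$ for every $k\ge d+2$, given (ii). List the roots $q^{a_1},\dots,q^{a_k}$ of $\prod P$ in the filling order and divide by the largest power of $q$. The count then becomes an alternating partial sum $\sum_{s\ge0}(-1)^s c_s e_s(y)$ with weights decreasing like $q^{-ms}$. Here the $y$'s are ratios lying in $(0,1/q]$, one per filled row after the first. For $q\ge3$ all $y\le1/3$, and I would prove by induction on the number of variables that the truncated sums have the sign of their first term. This relies on $e_{s+1}\le e_s\cdot\sum y/(s+1)$ together with the geometric decay of the weights. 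Condition (ii) controls the first two terms, which is exactly the $k=d+1$ case, and the later terms are then dominated.

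The main obstacle is $q=2$, where a block of length $m$ contributes $y=1/2$ and the crude domination fails. I would first dispose of $d\le2$ by direct computation. For $d\ge3$, I would use (ii) to show that $\sum_{i<j}(2^{n_i}-1)\le 2^m+1$. This forces a block of length $m$ beyond the first $d+1$ rows to occur only when $n_1=m$ and $d\le m$. I would then handle that configuration by pairing consecutive terms of the alternating sum and using the exact value $e_s$ of repeated $1/2$ entries.
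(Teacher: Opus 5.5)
Your structural part is sound and is essentially the paper's route. This covers the polynomial form of $f_\ell$ and the exchange identity. In your $x$-formulation that identity is just linearity of $P\mapsto\sum_{\ell\geqslant d}(q^{m(\ell-d+1)}-1)[x^\ell]P$ applied to $P_{\boldsymbol u+\mathbf e_i-\mathbf e_j}-P_{\boldsymbol u}=-(q^{u_i}-q^{u_j-1})P_{\boldsymbol u-\mathbf e_j}$, so no boundary term arises. It also covers the induction on total dimension reducing (i) to the greedy vectors, and the computation at total dimension $d+1$ giving (i)$\Rightarrow$(ii). The gap is in (ii)$\Rightarrow$(i), which is the real content of the theorem.

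After dividing the roots by $q^m$, one gets exactly $\omega_d(\boldsymbol g_s)=(q^m-1)q^{mr}\sum_{j=0}^r(-1)^je_j(X_s)$ with $r=s-d$. All weights are $1$, because the factor $q^{-mj}$ has been absorbed into the normalized entries; there is no additional geometric decay. The later terms are also not dominated by the first two. Condition (ii) only gives $e_1(B)\leqslant1$, while $e_1(X_s)$ can be as large as $1+(r-1)/q>1$. For example, take $q=3$, $m=1$, all blocks of length $1$, $d=3$ (so (ii) holds with equality) and $s=6$. The terms are then $1,\tfrac53,\tfrac{10}{9},\tfrac{10}{27}$, with alternating sum $\tfrac{2}{27}$. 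So neither monotone terms nor pairing $e_{2j}-e_{2j+1}\geqslant0$ can work. The Newton bound $e_{j+1}\leqslant e_1e_j/(j+1)$ gives a decrease only once $j+1\geqslant e_1$.

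The missing idea is to peel off the last two entries. With $W=X_{s-2}$ and $x,y$ the last two normalized entries, one has $C_r(W\cup\{x,y\})=(1-x)(1-y)C_{r-2}(W)+(1-x-y)e_{r-1}(W)-e_r(W)$. Then induct on odd $r$ in steps of two; even $r$ is automatic from $C_{2h}(X)\geqslant\prod_{x\in X}(1-x)$. It then suffices to show $e_r(W)\leqslant(1-x-y)e_{r-1}(W)$. For $q\geqslant3$ this follows from $re_r(W)\leqslant e_1(W)e_{r-1}(W)$ together with $e_1(W)\leqslant r/3$ and $1-x-y\geqslant1/3$.

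Your $q=2$ plan has the same problem and is vaguer still. The relevant multisets are not made of repeated $1/2$'s: each block of length $m$ contributes a single $1/2$ among the entries $2^{-m},\ldots,1/4$. The same non-monotonicity also occurs, for instance $m=4$, $n_1=n_2=4$, $d=3$, $s=8$ gives $e_1(X_8)=29/16$. The paper handles this with the sharp estimate $r(1-x-y)-e_1(W)\geqslant3/2^m$ for $m\geqslant4$, proved by a case analysis of the adjacent pairs with $x+y>1/2$; the constant $3$ is attained. For $m=3$ it uses a separate argument based on $re_r(W)\leqslant(\lvert W\rvert-r+1)\,a\,e_{r-1}(W)$ with $\lvert W\rvert=r$ and $a=1/2$.
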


Since $\lvert \widetilde{\boldsymbol{u}}\rvert=d+1$,
Lemma~\ref{lem:coefficient} below gives
$\omega_d(\widetilde{\boldsymbol{u}})=(q^m-1)(q^m+1-S(\widetilde{\boldsymbol{u}}))$.
Condition~\ref{it:single-test} is therefore precisely the inequality
$\omega_d(\widetilde{\boldsymbol{u}})\geqslant0$ that the conjecture
\cite[Conjecture~VI.8]{BGR2021} proposes to check; we refer to it as
the \emph{single test}. Its necessity is immediate, and the content of
Theorem~\ref{thm:main} is the sufficiency.

With $j$ and $\delta$ as in the introduction, that is,
\[
 \sum_{i<j}n_i<d+1\leqslant\sum_{i\leqslant j}n_i,
 \qquad \delta:= d+1-\sum_{i<j}n_i,
\]
we have $1\leqslant\delta\leqslant n_j$ and
$\widetilde{\boldsymbol{u}}=(n_1,\ldots,n_{j-1},\delta,0,\ldots,0)$, so
condition~\ref{it:single-test} is exactly~\eqref{eq:explicit-test}. The
case $d+1=n_1+\cdots+n_j$ corresponds to $\delta=n_j$. For the
parameters $q=m=3$, $(n_1,n_2,n_3)=(3,3,2)$ and $d=7$ of
\cite[Example~VI.7]{BGR2021}, the two sides of~\eqref{eq:explicit-test}
are $60$ and $56$, respectively, in agreement with the negative value
$\omega_7(3,3,2)=-52$ computed there.

\section{A Generating Function and Reduction to Greedy Vectors}
\label{sec:reduction}

This section reduces Theorem~\ref{thm:main} to a statement about the
greedy vectors $\boldsymbol{g}_s$ and rewrites the counts at these
vectors as alternating sums of elementary symmetric polynomials.

\subsection{Notation}

For a finite multiset $X=\{x_1,\ldots,x_L\}$ of real numbers,
its \emph{elementary symmetric polynomials} are
\[
 e_j(X):=\sum_{1\leqslant i_1<\cdots<i_j\leqslant L}
 x_{i_1}\cdots x_{i_j}\qquad(1\leqslant j\leqslant L).
\]
We set $e_0(X):=1$ and $e_j(X):=0$ for $j>L$.
Entries are counted with multiplicity, and adjoining or removing an
entry affects a single occurrence; $\cup$ denotes the union of
multisets, in which multiplicities add.
For a formal power series $F(z)$ over $\mathbb{Q}$, we write
$[z^r]F(z)$ for its coefficient of $z^r$.

\subsection{Formal Support Counts as Coefficients}

For $\boldsymbol{u}\in \mathcal I$ define
\[
 P_{\boldsymbol{u}}(z):=\prod_{i=1}^t\prod_{j=0}^{u_i-1}(1-q^jz).
\]
The finite $q$-binomial theorem \cite[Eq.~(17.2.35)]{DLMF} gives
\[
 \prod_{j=0}^{a-1}(1-q^jz)
 =\sum_{h=0}^{a}(-1)^h q^{\binom h2}
     \genfrac{[}{]}{0pt}{}{a}{h}_qz^h.
\]
Expanding each block factor of $P_{\boldsymbol{u}}(z)$ in this way and using
$\genfrac{[}{]}{0pt}{}{a}{h}_q=\genfrac{[}{]}{0pt}{}{a}{a-h}_q$
shows that, for $s=\lvert \boldsymbol{u}\rvert$,
\[
 [z^{s-\ell}]P_{\boldsymbol{u}}(z)=f_\ell(\boldsymbol{u})\qquad(0\leqslant\ell\leqslant s).
\]

\begin{lemma}\label{lem:coefficient}
Let $\boldsymbol{u}\in \mathcal I$ and $s=\lvert \boldsymbol{u}\rvert$. If $s\geqslant d$, then
\begin{equation}\label{eq:coefficient}
 \frac{\omega_d(\boldsymbol{u})}{q^m-1}
 =[z^{s-d}]\frac{P_{\boldsymbol{u}}(z)}{(1-z)(1-q^mz)}.
\end{equation}
Moreover, for every $\boldsymbol{u}\in\mathcal I$ with $\lvert\boldsymbol{u}\rvert\leqslant d+1$,
\[
 \omega_d(\boldsymbol{u})=\begin{cases}
 0,&\lvert \boldsymbol{u}\rvert<d,\\
 q^m-1,&\lvert \boldsymbol{u}\rvert=d,\\
 (q^m-1)(q^m+1-S(\boldsymbol{u})),&\lvert \boldsymbol{u}\rvert=d+1.
 \end{cases}
\]
\end{lemma}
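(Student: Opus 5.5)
The plan is to prove \eqref{eq:coefficient} by a direct coefficient comparison, and then to read off the three small cases from it.

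For the identity, expand the right-hand side as a Cauchy product. Since
\[
 \frac{1}{(1-z)(1-q^mz)}=\sum_{k\geqslant0}\frac{q^{m(k+1)}-1}{q^m-1}\,z^k,
\]
which follows from $\sum_{a+b=k}q^{mb}=(q^{m(k+1)}-1)/(q^m-1)$, we get
\[
 [z^{s-d}]\frac{P_{\boldsymbol{u}}(z)}{(1-z)(1-q^mz)}
 =\sum_{h=0}^{s-d}[z^h]P_{\boldsymbol{u}}(z)\cdot\frac{q^{m(s-d-h+1)}-1}{q^m-1}.
\]
Now substitute $h=s-\ell$, so that $\ell$ runs from $d$ to $s$. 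Using the already established relation $[z^{s-\ell}]P_{\boldsymbol{u}}(z)=f_\ell(\boldsymbol{u})$, the exponent becomes $m(\ell-d+1)$. This gives exactly $\omega_d(\boldsymbol{u})/(q^m-1)$, which proves \eqref{eq:coefficient} whenever $s\geqslant d$.

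For the case list, the case $\lvert\boldsymbol{u}\rvert<d$ is immediate, since the sum defining $\omega_d$ is then empty. For $s=d$, \eqref{eq:coefficient} gives $[z^0]$ of the rational function, which is $P_{\boldsymbol{u}}(0)=1$, so $\omega_d(\boldsymbol{u})=q^m-1$.

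For $s=d+1$, we need the coefficient of $z^1$. It equals $[z^1]P_{\boldsymbol{u}}+[z^1]\frac{1}{(1-z)(1-q^mz)}$, and the second term is $1+q^m$. For the first term, $[z^1]P_{\boldsymbol{u}}(z)=-\sum_i\sum_{j=0}^{u_i-1}q^j=-S(\boldsymbol{u})$. Hence $\omega_d(\boldsymbol{u})=(q^m-1)(q^m+1-S(\boldsymbol{u}))$.

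There is no real obstacle in this argument. The only point needing care is the index bookkeeping when matching the Cauchy product with the summation range $d\leqslant\ell\leqslant s$ in the definition of $\omega_d$. One should also check that the boundary term $\ell=s$, that is $h=0$, contributes $(q^m-1)f_s$ with $f_s(\boldsymbol{u})=1$.
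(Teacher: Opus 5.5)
Your proof is correct and follows the paper's argument essentially verbatim. You convolve $P_{\boldsymbol{u}}$ with the series of $1/((1-z)(1-q^mz))$, reindex via $[z^{s-\ell}]P_{\boldsymbol{u}}=f_\ell(\boldsymbol{u})$, and read off the coefficients of $z^0$ and $z^1$ using $[z^1]P_{\boldsymbol{u}}=-S(\boldsymbol{u})$. One small slip in your closing remark: the $h=0$ (i.e.\ $\ell=s$) term carries the weight $q^{m(s-d+1)}-1$, not $q^m-1$; the weight $q^m-1$ belongs to $\ell=d$, and this does not affect the argument.
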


\begin{proof}
By the identity above, $[z^j]P_{\boldsymbol{u}}(z)=f_{s-j}(\boldsymbol{u})=(-1)^je_j(A)$,
where $A$ is the multiset consisting of $q^h$ for each pair $(i,h)$ with
$1\leqslant i\leqslant t$ and $0\leqslant h<u_i$. Also,
\[
\begin{aligned}
 \relax[z^r]\frac{1}{(1-z)(1-q^mz)}
 &=1+q^m+\cdots+q^{mr}\\
 &=\frac{q^{m(r+1)}-1}{q^m-1}\qquad(r\geqslant0).
\end{aligned}
\]
For $r=s-d\geqslant0$, convolution therefore gives
\[
 \begin{aligned}
 \relax[z^r]\frac{P_{\boldsymbol{u}}(z)}{(1-z)(1-q^mz)}
 &=\frac{1}{q^m-1}\!\sum_{j=0}^r
   (q^{m(r-j+1)}-1)f_{s-j}(\boldsymbol{u})\\
 &=\frac{1}{q^m-1}\sum_{\ell=d}^s
   (q^{m(\ell-d+1)}-1)f_\ell(\boldsymbol{u}),
 \end{aligned}
\]
which is \eqref{eq:coefficient}. For $\lvert\boldsymbol{u}\rvert<d$ the
defining sum of $\omega_d(\boldsymbol{u})$ is empty. For
$\lvert\boldsymbol{u}\rvert=d$ and $d+1$, read off the coefficients of
$z^0$ and $z^1$ in \eqref{eq:coefficient}, using
$e_1(A)=S(\boldsymbol{u})$.
\end{proof}

\subsection{Reduction to Greedy Vectors}

The next lemma compares vectors of the same total dimension by moving
one unit from one coordinate to another.

\begin{lemma}\label{lem:greedy}
Fix $d+1\leqslant s\leqslant N$. If $\omega_d(\boldsymbol{w})\geqslant0$ for every $\boldsymbol{w}\in \mathcal I$
with $\lvert \boldsymbol{w}\rvert=s-1$, then
\[
 \omega_d(\boldsymbol{u})\geqslant\omega_d(\boldsymbol{g}_s)
 \qquad\text{for every }\boldsymbol{u}\in \mathcal I\text{ with }\lvert \boldsymbol{u}\rvert=s.
\]
Consequently, the inequalities $\omega_d(\boldsymbol{u})\geqslant0$ for all
$\boldsymbol{u}\in \mathcal I$ are equivalent to
$\omega_d(\boldsymbol{g}_s)\geqslant0$ for $d+1\leqslant s\leqslant N$.
\end{lemma}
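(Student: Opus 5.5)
The plan is to derive the exchange identity stated in the introduction from Lemma~\ref{lem:coefficient}. Then a sorting-and-moving argument carries any $\boldsymbol u$ to $\boldsymbol g_s$ without increasing $\omega_d$, and an induction on $s$ gives the consequence.

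\emph{Step 1 (exchange identity).} Let $\lvert\boldsymbol u\rvert=s\geqslant d+1$ with $u_j\geqslant1$ and $u_i<n_i$, where $i\neq j$. Both $P_{\boldsymbol u}$ and $P_{\boldsymbol u+\mathbf e_i-\mathbf e_j}$ contain the factor $P_{\boldsymbol u-\mathbf e_j}(z)$:
\[
P_{\boldsymbol u}=P_{\boldsymbol u-\mathbf e_j}(1-q^{u_j-1}z),\qquad
P_{\boldsymbol u+\mathbf e_i-\mathbf e_j}=P_{\boldsymbol u-\mathbf e_j}(1-q^{u_i}z).
\]
Their difference is therefore $-(q^{u_i}-q^{u_j-1})\,z\,P_{\boldsymbol u-\mathbf e_j}(z)$. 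Divide by $(1-z)(1-q^mz)$ and take the coefficient of $z^{s-d}$. The factor $z$ shifts this to the coefficient of $z^{(s-1)-d}$, and $\lvert\boldsymbol u-\mathbf e_j\rvert=s-1\geqslant d$. Applying \eqref{eq:coefficient} to all three vectors then gives
\[
\omega_d(\boldsymbol u+\mathbf e_i-\mathbf e_j)-\omega_d(\boldsymbol u)=-(q^{u_i}-q^{u_j-1})\,\omega_d(\boldsymbol u-\mathbf e_j).
\]
Under the hypothesis of the lemma, $\omega_d(\boldsymbol u-\mathbf e_j)\geqslant0$. So whenever $u_i\geqslant u_j$, the move does not increase $\omega_d$.

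\emph{Step 2 (sorting invariance).} $P_{\boldsymbol u}$, and hence $\omega_d(\boldsymbol u)$, depends only on the multiset of coordinates of $\boldsymbol u$. Suppose $i<j$ and $u_i<u_j$. Then $u_j\leqslant n_j\leqslant n_i$, so swapping the values $u_i$ and $u_j$ yields a vector that is still in $\mathcal I$, and the swap leaves $\omega_d$ unchanged. Repeating such swaps, we may assume $\boldsymbol u$ is nonincreasing.

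\emph{Step 3 (moving toward $\boldsymbol g_s$).} Suppose $\boldsymbol u$ is nonincreasing and $\boldsymbol u\neq\boldsymbol g_s$. Let $i$ be the least index with $u_i<n_i$. Then some $j>i$ has $u_j\geqslant1$, since otherwise $\boldsymbol u$ would equal $\boldsymbol g_s$. Monotonicity gives $u_i\geqslant u_j$, so by Step~1 the move from $j$ to $i$ does not increase $\omega_d$. Afterwards we re-sort as in Step~2.

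For termination, use the potential $\Phi(\boldsymbol u)=\sum_k k\,u_k$. A move from $j$ to $i<j$ lowers $\Phi$ by $j-i>0$. A swap placing a larger value at a smaller index also lowers $\Phi$ strictly. Since $\Phi$ takes values in the nonnegative integers, the process ends, and it can end only at $\boldsymbol g_s$. This gives $\omega_d(\boldsymbol u)\geqslant\omega_d(\boldsymbol g_s)$.

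\emph{Step 4 (consequence).} One direction is trivial. For the other, assume $\omega_d(\boldsymbol g_s)\geqslant0$ for $d+1\leqslant s\leqslant N$. We prove $\omega_d\geqslant0$ in each total dimension by induction on $s$. For $s\leqslant d$, Lemma~\ref{lem:coefficient} gives the values $0$ or $q^m-1$, which are nonnegative. For $s\geqslant d+1$, the induction hypothesis covers total dimension $s-1$, so the first part applies and yields $\omega_d(\boldsymbol u)\geqslant\omega_d(\boldsymbol g_s)\geqslant0$.

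The only delicate point is the combinatorial chain in Step~3: a move from $j$ to $i$ is not available directly when $u_i<u_j$. The ordering $n_1\geqslant\cdots\geqslant n_t$ is what makes the value swaps of Step~2 legal, and the potential $\Phi$ is what guarantees that alternating swaps and moves terminates.
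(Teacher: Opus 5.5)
Your proof is correct and follows the paper's argument: the exchange identity comes from factoring both polynomials through $P_{\boldsymbol u-\mathbf e_j}$, and the rest is sorting, transfers into the first unfilled coordinate, and induction on the total dimension. The differences are cosmetic: the paper takes $j$ to be the last positive coordinate, so each transfer keeps the vector nonincreasing, and it uses the lexicographic order for termination, whereas you re-sort after each move and use the potential $\sum_k k\,u_k$.
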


\begin{proof}
Any adjacent inversion $u_i<u_{i+1}$ can be removed by
interchanging the two coordinates. The interchange is feasible because
$n_i\geqslant n_{i+1}$; it leaves $P_{\boldsymbol{u}}$ and hence $\omega_d(\boldsymbol{u})$ unchanged.
After finitely many interchanges we may therefore assume
$u_1\geqslant\cdots\geqslant u_t$.

If $\boldsymbol{u}\ne \boldsymbol{g}_s$, let $i$ be the first coordinate
with $u_i<n_i$. Some $j>i$ has $u_j>0$, since otherwise
$\boldsymbol{u}=\boldsymbol{g}_s$; choose the last such $j$.
Set $\boldsymbol{w}=\boldsymbol{u}-\mathbf e_j$ and $\boldsymbol{u}'=\boldsymbol{u}+\mathbf e_i-\mathbf e_j$. The transfer
$\boldsymbol{u}\mapsto\boldsymbol{u}'$ stays in $\mathcal I$ and keeps the
coordinates nonincreasing.
Indeed, if $i>1$ then $u_{i-1}=n_{i-1}\geqslant n_i\geqslant u_i+1$,
while $u_{j+1}=0\leqslant u_j-1$ if $j<t$; all other
neighboring inequalities are preserved. The product identity
\[
 P_{\boldsymbol{u}'}(z)-P_{\boldsymbol{u}}(z)
 =-(q^{u_i}-q^{u_j-1})zP_{\boldsymbol{w}}(z)
\]
combined with Lemma~\ref{lem:coefficient} yields
\[
 \omega_d(\boldsymbol{u}')-\omega_d(\boldsymbol{u})
 =-(q^{u_i}-q^{u_j-1})\omega_d(\boldsymbol{w})\leqslant0.
\]
The coefficient in parentheses is positive because $i<j$ and $\boldsymbol{u}$
is nonincreasing. Each transfer strictly increases the vector in the
lexicographic order, and $\mathcal I$ is finite, so after finitely many
transfers no positive coordinate lies to the right of the first
coordinate $i$ with $u_i<n_i$. This property characterizes
$\boldsymbol{g}_s$, which proves the inequality.
For the consequence, assume
$\omega_d(\boldsymbol{g}_s)\geqslant0$ for $d+1\leqslant s\leqslant N$.
Lemma~\ref{lem:coefficient} gives nonnegativity through dimension
$d$. Inductively, nonnegativity in dimension $s-1$ and the first
assertion give
$\omega_d(\boldsymbol{u})\geqslant\omega_d(\boldsymbol{g}_s)\geqslant0$
for all $\boldsymbol{u}$ of total dimension $s$.
The converse follows by taking $\boldsymbol{u}=\boldsymbol{g}_s$.
\end{proof}

\subsection{Normalization}

By Lemma~\ref{lem:greedy}, it remains to study the counts at the
vectors $\boldsymbol{g}_s$. Let
$\alpha:=(\alpha_1,\ldots,\alpha_N)$ be the sequence obtained by concatenating
$(1,q,\ldots,q^{n_i-1})$ for $i=1,\ldots,t$, in that order.
By construction, $P_{\boldsymbol{g}_s}(z)=\prod_{j=1}^s(1-\alpha_jz)$.
Put
\[
\begin{aligned}
 B&:=\{\alpha_2/q^m,\ldots,\alpha_{d+1}/q^m\},\\
 X_s&:=\{\alpha_2/q^m,\ldots,\alpha_s/q^m\}\quad(s\geqslant d+1).
\end{aligned}
\]
Thus $X_{d+1}=B$. We call the numbers $\alpha_j/q^m$ the
\emph{normalized entries}; each of them lies in $(0,1/q]$. Since $\alpha_1=1$ and
$\alpha_1+\cdots+\alpha_{d+1}=S(\boldsymbol{g}_{d+1})$, the single test
$S(\boldsymbol{g}_{d+1})\leqslant q^m+1$ is equivalent to
\begin{equation}\label{eq:initial-sum}
 \sum_{x\in B}x=\frac{S(\boldsymbol{g}_{d+1})-1}{q^m}\leqslant1.
\end{equation}
For a finite multiset $X$ and an integer $r\geqslant0$, put
\[
 C_r(X):=\sum_{j=0}^r(-1)^j e_j(X).
\]
The polynomial $P_{\boldsymbol{g}_s}$ contains the factor
$1-z$ arising from $\alpha_1=1$. Canceling it in
\eqref{eq:coefficient} and expanding $1/(1-q^mz)$ gives, for
$r=s-d\geqslant1$,
\begin{equation}\label{eq:normalized}
 \omega_d(\boldsymbol{g}_s)=(q^m-1)q^{mr} C_r(X_s).
\end{equation}

\section{Alternating Sums of Elementary Symmetric Polynomials}
\label{sec:positivity}

By Lemma~\ref{lem:greedy} and~\eqref{eq:normalized}, the implication
\ref{it:single-test}$\Rightarrow$\ref{it:all-nonnegative} of
Theorem~\ref{thm:main} amounts to showing that~\ref{it:single-test}
implies $C_r(X_s)\geqslant0$ for $d+1\leqslant s\leqslant N$. We
carry this out for $q\geqslant3$ and also record the estimates needed
for $q=2$. Recall that $e_1(X)=\sum_{x\in X}x$.

\subsection{Elementary Estimates}

\begin{lemma}\label{lem:symmetric}
Let $X$ be a finite multiset of real numbers in $[0,1]$. For every
nonnegative integer $h$ and positive integer $r$,
\begin{align*}
 C_{2h}(X)&\geqslant\prod_{x\in X}(1-x)\geqslant0,\\
 r e_r(X)&\leqslant e_1(X)e_{r-1}(X).
\end{align*}
Moreover, for every finite multiset $W$ in $[0,1]$, every odd
$r\geqslant3$ and $x,y\in[0,1]$,
\begin{equation}\label{eq:two-factor}
 \begin{split}
 C_r(W\cup\{x,y\})={}&(1-x)(1-y)C_{r-2}(W)\\
 &+(1-x-y)e_{r-1}(W)-e_r(W).
 \end{split}
\end{equation}
\end{lemma}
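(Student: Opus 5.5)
We prove the three assertions separately, each by induction on the size of the multiset or by direct expansion of generating polynomials. Throughout, write $E_X(z):=\prod_{x\in X}(1+xz)=\sum_j e_j(X)z^j$. Then $C_r(X)=[z^r]\,E_X(-z)/(1-z)$, so $C_r$ is the $r$-th partial sum of the coefficients of $\prod_{x\in X}(1-xz)$.

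\emph{First inequality.} We induct on $\lvert X\rvert$. If $X=\emptyset$, then $C_{2h}=1=\prod(1-x)$. For $X=Y\cup\{x\}$ we have $e_j(X)=e_j(Y)+xe_{j-1}(Y)$, and hence
\[
C_r(X)=C_r(Y)-xC_{r-1}(Y)=(1-x)C_{r}(Y)+x\,(-1)^r e_r(Y).
\]
For $r=2h$ even, the second term is $xe_{2h}(Y)\geqslant0$. By induction $C_{2h}(Y)\geqslant\prod_{y\in Y}(1-y)$, and multiplying by $1-x\geqslant0$ gives
\[
C_{2h}(X)\geqslant(1-x)\prod_{y\in Y}(1-y)=\prod_{x'\in X}(1-x').
\]
The right-hand side is nonnegative since every entry lies in $[0,1]$. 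This is the Bonferroni-type bound, and it is the easy part of the lemma.

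\emph{Newton-type inequality.} Expand the product
\[
e_1(X)e_{r-1}(X)=\sum_{x\in X}x\sum_{\lvert T\rvert=r-1}\prod_{y\in T}y .
\]
Each $r$-subset $R$ of indices arises exactly $r$ times, as $T\cup\{x\}$ with $x\in R$ and $T=R\setminus\{x\}$. The remaining terms, with $x\in T$, are nonnegative because all entries are nonnegative. This gives $e_1e_{r-1}\geqslant re_r$ directly; it is a counting argument and needs no Newton–Maclaurin log-concavity.

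\emph{Identity \eqref{eq:two-factor}.} We apply the one-element recursion twice. From $C_r(W\cup\{x\})=C_r(W)-xC_{r-1}(W)$ we get
\[
C_r(W\cup\{x,y\})=C_r(W)-(x+y)C_{r-1}(W)+xyC_{r-2}(W).
\]
Now substitute $C_r(W)=C_{r-2}(W)-e_{r-1}(W)+(-1)^re_r(W)$ and $C_{r-1}(W)=C_{r-2}(W)-e_{r-1}(W)$, both with signs fixed by $r$ odd: $(-1)^{r-1}=1$ and $(-1)^r=-1$. This yields
\[
C_{r-2}(W)\bigl(1-x-y+xy\bigr)-e_{r-1}(W)+(x+y)e_{r-1}(W)-e_r(W),
\]
which is exactly the right-hand side of \eqref{eq:two-factor}. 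The condition $r\geqslant3$ only ensures that $C_{r-2}$ is defined.

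The only point requiring care is the sign bookkeeping in this last substitution; the other two parts are routine once the recursion $C_r(Y\cup\{x\})=C_r(Y)-xC_{r-1}(Y)$ is established.
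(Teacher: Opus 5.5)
\textbf{The first two parts are fine; the substitution in the third part has a sign error, so as written it does not prove \eqref{eq:two-factor}.}

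Your first two parts are correct and follow the paper's proof essentially verbatim. The first uses the one-element recursion $C_{2h}(Y\cup\{x\})=(1-x)C_{2h}(Y)+xe_{2h}(Y)$ with induction on $\lvert X\rvert$. The second is the counting expansion of $e_1(X)e_{r-1}(X)$.

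For the third part you also take the paper's route: express $C_r(W)$ and $C_{r-1}(W)$ through $C_{r-2}(W)$. Your two-step recursion
\[
C_r(W\cup\{x,y\})=C_r(W)-(x+y)C_{r-1}(W)+xyC_{r-2}(W)
\]
is right, but the substitution that follows is not. For odd $r$ the index $r-1$ is even, so
\[
C_{r-1}(W)=C_{r-2}(W)+e_{r-1}(W),\qquad C_r(W)=C_{r-2}(W)+e_{r-1}(W)-e_r(W).
\]
You wrote $-e_{r-1}(W)$ in both identities, which contradicts your own remark that $(-1)^{r-1}=1$. With your identities, the expression you display equals
\[
(1-x)(1-y)C_{r-2}(W)-(1-x-y)e_{r-1}(W)-e_r(W).
\]
This differs from the right-hand side of \eqref{eq:two-factor} by $2(1-x-y)e_{r-1}(W)$. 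For example, with $W=\{a,a\}$, $x=y=a$ and $r=3$, it gives $(1-2a)^2$, whereas $C_3(\{a,a,a,a\})=1-4a+6a^2-4a^3$. So the claim ``which is exactly the right-hand side'' is false as written. This is precisely the step you flagged as needing care.

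The sign is not cosmetic. Every later use of \eqref{eq:two-factor} in the paper needs the term $+(1-x-y)e_{r-1}(W)$ to dominate $e_r(W)$.

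The fix is immediate. Substituting the correct identities gives
\[
(1-x-y+xy)\,C_{r-2}(W)+(1-x-y)\,e_{r-1}(W)-e_r(W),
\]
which is \eqref{eq:two-factor}. With that correction, your proof coincides with the paper's.
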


\begin{proof}
For the first inequality, induct on the size of $X$. Adjoining an
entry $x\in[0,1]$ gives
\[
 C_{2h}(X\cup\{x\})
 =(1-x)C_{2h}(X)+xe_{2h}(X),
\]
which preserves the stated lower bound. For the second, index the
entries of $X$ as $x_1,\ldots,x_L$ and expand
$e_1(X)e_{r-1}(X)=\sum_i\sum_J x_i\prod_{j\in J}x_j$, where $J$ runs
over the $(r-1)$-element subsets of $\{1,\ldots,L\}$. The terms with
$i\notin J$ contribute $r e_r(X)$, since each $r$-element set $K$
arises from the $r$ choices of $i\in K$. Hence
\begin{equation}\label{eq:er-gap}
 e_1(X)e_{r-1}(X)-r e_r(X)
 =\sum_{\lvert J\rvert=r-1}\Bigl(\prod_{j\in J}x_j\Bigr)
   \Bigl(\sum_{j\in J}x_j\Bigr)\geqslant0.
\end{equation}
Expanding
$C_r(W\cup\{x,y\})$ and using
the identities
\[
\begin{aligned}
C_{r-1}(W)&=C_{r-2}(W)+e_{r-1}(W),\\
C_r(W)&=C_{r-2}(W)+e_{r-1}(W)-e_r(W)
\end{aligned}
\]
proves \eqref{eq:two-factor}.
\end{proof}

\begin{lemma}\label{lem:one-third}
Let $B$ be a finite multiset in $[0,1]$ such that
$e_1(B)\leqslant1$, and let $y_1,\ldots,y_h\in[0,1/3]$, where $h\geqslant0$.
Then
\[
 C_{h+1}(B\cup\{y_1,\ldots,y_h\})\geqslant0.
\]
\end{lemma}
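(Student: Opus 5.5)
The plan is induction on $h$, treating the parity of $h+1$ separately. When $h+1$ is even, Lemma~\ref{lem:symmetric} applies directly. When $h+1$ is odd, the two-factor identity~\eqref{eq:two-factor} peels off two of the $y$'s and reduces to the case $h-2$. All entries lie in $[0,1]$, since $B\subseteq[0,1]$ and $y_i\in[0,1/3]$, so Lemma~\ref{lem:symmetric} is available throughout.

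\emph{Base and even case.} If $h=0$, then $C_1(B)=1-e_1(B)\geqslant0$ by hypothesis. If $h+1=2k$ is even (in particular $h=1$), the first inequality of Lemma~\ref{lem:symmetric} gives
\[
 C_{2k}(B\cup\{y_1,\ldots,y_h\})\geqslant\prod(1-x)\geqslant0,
\]
with no further hypothesis needed.

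\emph{Odd case.} Suppose $r:=h+1\geqslant3$ is odd, so $h\geqslant2$ is even, and assume the statement for $h-2$. Put $W:=B\cup\{y_1,\ldots,y_{h-2}\}$, $x:=y_{h-1}$ and $y:=y_h$. By~\eqref{eq:two-factor},
\[
 C_{h+1}(W\cup\{x,y\})=(1-x)(1-y)C_{h-1}(W)+(1-x-y)e_h(W)-e_{h+1}(W).
\]
The first term is nonnegative, since $C_{h-1}(W)\geqslant0$ is the induction hypothesis for $h-2$. It then suffices to show $e_{h+1}(W)\leqslant(1-x-y)e_h(W)$. Because $x,y\leqslant1/3$, we have $1-x-y\geqslant1/3$, so it is enough to prove $e_{h+1}(W)\leqslant\tfrac13e_h(W)$.

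\emph{Key estimate.} The second inequality of Lemma~\ref{lem:symmetric} with $r=h+1$ gives
\[
 e_{h+1}(W)\leqslant\frac{e_1(W)}{h+1}\,e_h(W).
\]
Moreover $e_1(W)\leqslant e_1(B)+\tfrac{h-2}{3}\leqslant1+\tfrac{h-2}{3}=\tfrac{h+1}{3}$. Hence $e_{h+1}(W)\leqslant\tfrac13e_h(W)$, using $e_h(W)\geqslant0$, which completes the induction.

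The main point, and the only potential obstacle, is this last bound. The constant $1/3$ is exactly what makes $\tfrac{e_1(W)}{h+1}\leqslant\tfrac13$ hold uniformly in $h$, with the $1$ from $e_1(B)$ absorbing the two removed entries. This also explains why $q=2$, where an entry $1/2$ can occur, requires a separate argument.
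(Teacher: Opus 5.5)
Your proof is correct and is essentially the paper's argument. The even case comes from the first inequality of Lemma~\ref{lem:symmetric}. The odd case is an induction in steps of two via \eqref{eq:two-factor}, where $e_1(W)\leqslant 1+(r-3)/3=r/3$ and $r e_r(W)\leqslant e_1(W)e_{r-1}(W)$ give $e_r(W)\leqslant e_{r-1}(W)/3\leqslant(1-x-y)e_{r-1}(W)$.
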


\begin{proof}
If $h+1$ is even, the claim follows from Lemma~\ref{lem:symmetric}.
For odd $r=h+1$, proceed by induction in steps of two. The case $r=1$
is $C_1(B)=1-e_1(B)\geqslant0$. For $r\geqslant3$, set
\[
 W=B\cup\{y_1,\ldots,y_{r-3}\},
 \qquad x=y_{r-2},\quad y=y_{r-1}.
\]
The induction hypothesis applies to $W$, which has $r-3$ additional
entries, and gives $C_{r-2}(W)\geqslant0$. Moreover,
$e_1(W)\leqslant1+(r-3)/3=r/3$, so Lemma~\ref{lem:symmetric}
gives $e_r(W)\leqslant e_{r-1}(W)/3$. Since $x+y\leqslant2/3$,
\eqref{eq:two-factor} yields
\[
 C_r(W\cup\{x,y\})
 \geqslant(1-x)(1-y)C_{r-2}(W)\geqslant0,
\]
which completes the induction.
\end{proof}

\begin{lemma}\label{lem:complement}
Let $W$ be a multiset of $L$ real numbers in $[0,a]$.
For $1\leqslant r\leqslant L+1$,
\begin{equation}\label{eq:complement}
 r e_r(W)\leqslant(L-r+1)a\,e_{r-1}(W).
\end{equation}
\end{lemma}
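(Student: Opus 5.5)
The plan is a direct double-counting argument, the same device used for the second inequality of Lemma~\ref{lem:symmetric}. Index the entries of $W$ as $w_1,\ldots,w_L$. Every entry lies in $[0,a]$, so $a-w_i\geqslant0$ for each $i$. Consider the nonnegative quantity
\[
 \sum_{\lvert J\rvert=r-1}\Bigl(\prod_{j\in J}w_j\Bigr)\sum_{i\notin J}(a-w_i)\geqslant0,
\]
where $J$ runs over the $(r-1)$-element subsets of $\{1,\ldots,L\}$. If $r-1>L$ there are no such subsets, but the range $r\leqslant L+1$ ensures $r-1\leqslant L$, so the sum is well defined. The empty subset is allowed when $r=1$.

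Expand this sum into two parts and identify each with an elementary symmetric polynomial.
\begin{enumerate}[label=\textup{(\alph*)}]
\item The term involving $a$ is $a\sum_{J}(L-r+1)\prod_{j\in J}w_j=(L-r+1)a\,e_{r-1}(W)$, since each $J$ has exactly $L-r+1$ indices outside it.
\item The term involving $w_i$ is $\sum_J\sum_{i\notin J}w_i\prod_{j\in J}w_j$. Each $r$-element set $K=J\cup\{i\}$ arises from exactly $r$ pairs $(J,i)$, one for each choice of $i\in K$. Hence this term equals $r\,e_r(W)$.
\end{enumerate}
Subtracting gives $(L-r+1)a\,e_{r-1}(W)-r\,e_r(W)\geqslant0$, which is \eqref{eq:complement}.

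The edge cases are consistent with this argument. For $r=L+1$, both sides vanish, since $e_{L+1}(W)=0$ and the factor $L-r+1$ is zero. For $r=1$, the inequality reads $e_1(W)\leqslant La$, which is clear. No real obstacle is expected; the only care needed is the bookkeeping of the conventions $e_0=1$ and $e_j=0$ for $j>L$, so that the identity in~(b) holds uniformly at the endpoints of the range.
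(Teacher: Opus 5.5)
Your proof is correct and is essentially the paper's argument. The paper also writes $r\,e_r(W)$ as a sum over $(r-1)$-element index sets $J$ of $\prod_{j\in J}w_j$ times $\sum_{i\notin J}w_i$, and bounds each inner sum by $(L-r+1)a$; this is the same as the nonnegativity of your $\sum_{i\notin J}(a-w_i)$ weighted by the nonnegative products.
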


\begin{proof}
Index the entries of $W$ as $w_1,\ldots,w_L$. As in the proof of
Lemma~\ref{lem:symmetric}, $r e_r(W)$ is the sum, over all
$(r-1)$-element sets $I$ of indices, of $\prod_{i\in I}w_i$ times the
sum of the $L-r+1$ entries $w_j$ with $j\notin I$. Each of these
sums is at most $(L-r+1)a$, which proves \eqref{eq:complement}; for
$r=L+1$ both sides vanish.
\end{proof}

\subsection{Fields With at Least Three Elements}

\begin{proposition}\label{prop:q-at-least-3}
If $q\geqslant3$, then condition~\ref{it:single-test} of
Theorem~\ref{thm:main} implies condition~\ref{it:all-nonnegative}.
\end{proposition}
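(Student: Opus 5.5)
By Lemma~\ref{lem:greedy} and \eqref{eq:normalized}, it suffices to show that condition~\ref{it:single-test} implies $C_r(X_s)\geqslant0$ for every $s$ with $d+1\leqslant s\leqslant N$, where $r=s-d\geqslant1$. The prefactor $(q^m-1)q^{mr}$ in \eqref{eq:normalized} is positive, so the sign of $\omega_d(\boldsymbol{g}_s)$ is the sign of $C_r(X_s)$. The plan is to apply Lemma~\ref{lem:one-third} directly, taking $B$ as defined in the normalization subsection and letting the $y$'s be the remaining normalized entries.

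The setup is as follows. By construction,
\[
 X_s=B\cup\{y_1,\ldots,y_h\},\qquad y_k:=\alpha_{d+1+k}/q^m,\quad h:=s-d-1=r-1\geqslant0.
\]
Each normalized entry has the form $\alpha_j/q^m=q^{c}/q^m$ with $0\leqslant c\leqslant n_i-1\leqslant m-1$. Hence every entry lies in $(0,1/q]$, and for $q\geqslant3$ every entry lies in $(0,1/3]\subseteq[0,1]$. In particular, $B$ is a finite multiset in $[0,1]$, and $y_1,\ldots,y_h\in[0,1/3]$.

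The single test~\ref{it:single-test} is equivalent to $e_1(B)\leqslant1$ by \eqref{eq:initial-sum}. The hypotheses of Lemma~\ref{lem:one-third} therefore hold, and it gives
\[
 C_{h+1}(X_s)=C_r(X_s)\geqslant0 .
\]
Applying \eqref{eq:normalized}, this yields $\omega_d(\boldsymbol{g}_s)\geqslant0$ for all $d+1\leqslant s\leqslant N$. The consequence part of Lemma~\ref{lem:greedy} then gives $\omega_d(\boldsymbol{u})\geqslant0$ for all $\boldsymbol{u}\in\mathcal I$, which is condition~\ref{it:all-nonnegative}.

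No step here is a genuine obstacle. The only points to check are bookkeeping. First, the index shift must be right: $h=s-d-1$ gives $h+1=r$, so Lemma~\ref{lem:one-third} controls exactly $C_r$ and not a different partial sum. Second, the bound $1/q\leqslant1/3$ must cover all the $y$'s. This holds because of the standing assumption $n_i\leqslant m$, which forces every exponent $c$ to be at most $m-1$. This is precisely where $q\geqslant3$ is used, and it explains why $q=2$, where entries equal to $1/2$ can occur, needs a separate argument.
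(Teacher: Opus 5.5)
Your proposal is correct and follows the paper's proof essentially step for step. You use the same decomposition $X_s=B\cup\{\alpha_{d+2}/q^m,\ldots,\alpha_s/q^m\}$ with $r-1$ extra entries in $(0,1/q]\subseteq[0,1/3]$, apply Lemma~\ref{lem:one-third} under \eqref{eq:initial-sum}, and conclude via \eqref{eq:normalized} and Lemma~\ref{lem:greedy}.
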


\begin{proof}
For $q\geqslant3$, the normalized entries $\alpha_j/q^m$ with $j>d+1$
all belong to $(0,1/q]\subseteq[0,1/3]$, and
condition~\ref{it:single-test} is~\eqref{eq:initial-sum}. Since
$X_s=B\cup\{\alpha_{d+2}/q^m,\ldots,\alpha_s/q^m\}$ consists of $B$
and $s-d-1=r-1$ further entries, where $r=s-d$,
Lemma~\ref{lem:one-third} gives
$C_r(X_s)\geqslant0$ for $d+1\leqslant s\leqslant N$. The assertion
follows from~\eqref{eq:normalized} and Lemma~\ref{lem:greedy}.
\end{proof}

\subsection{Small Distances}

The two smallest distances require no hypothesis on
$S(\boldsymbol{g}_{d+1})$, for any $q$.

\begin{lemma}\label{lem:small-distance}
If $d=1$ or $d=2$, then $\omega_d(\boldsymbol{u})\geqslant0$ for every
$\boldsymbol{u}\in\mathcal I$.
\end{lemma}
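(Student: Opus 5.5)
The plan is to invoke the consequence in Lemma~\ref{lem:greedy}: all inequalities $\omega_d(\boldsymbol{u})\geqslant0$ hold as soon as $\omega_d(\boldsymbol{g}_s)\geqslant0$ for every $s$ with $d+1\leqslant s\leqslant N$. By \eqref{eq:normalized}, this is equivalent to $C_r(X_s)\geqslant0$ with $r=s-d\geqslant1$. The key observation is that for $d\leqslant2$ the index $r$ is at least $\lvert X_s\rvert-1$. Hence $C_r(X_s)$ is either the full alternating sum or misses only its top term, and both can be written in closed form without any hypothesis on $e_1(B)$.

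Write $L:=\lvert X_s\rvert=s-1$. Every entry of $X_s$ is a normalized entry, so it lies in $(0,1/q]\subseteq(0,1/2]$. For every finite multiset $X$ we have the identity
\[
 C_{\lvert X\rvert}(X)=\sum_{j=0}^{\lvert X\rvert}(-1)^je_j(X)=\prod_{x\in X}(1-x).
\]

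For $d=1$ we have $r=s-1=L$, so $C_r(X_s)=\prod_{x\in X_s}(1-x)$. This is positive because every entry is less than $1$.

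For $d=2$ we have $r=s-2=L-1$. Removing the top term from the full sum gives
\[
 C_{L-1}(X_s)=\prod_{x\in X_s}(1-x)-(-1)^{L}\prod_{x\in X_s}x .
\]
If $L$ is odd, both terms are nonnegative. If $L$ is even, we use $x\leqslant1/q\leqslant1/2$, which gives $1-x\geqslant x$ for each entry. Multiplying these inequalities over the nonnegative factors yields $\prod(1-x)\geqslant\prod x$, so $C_{L-1}(X_s)\geqslant0$. This is the only place where a bound on the entries is used, and it holds for every $q$, including $q=2$.

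Combining both cases with \eqref{eq:normalized} gives $\omega_d(\boldsymbol{g}_s)\geqslant0$ for all $d+1\leqslant s\leqslant N$. Lemma~\ref{lem:greedy} then gives nonnegativity at every $\boldsymbol{u}\in\mathcal I$; its inductive argument starts from Lemma~\ref{lem:coefficient} in dimensions at most $d$. I do not expect a genuine obstacle. The only points to check carefully are the indexing, namely $\lvert X_s\rvert=s-1$ and $r=s-d$, and the sign of the single omitted term when $d=2$.
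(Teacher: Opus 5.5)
Your proof is correct and uses the same computation as the paper: cancel the factor $1-z$, observe that $r\geqslant\lvert X\rvert-1$, and finish the $d=2$ case with $1-x\geqslant x$. The only difference is that the paper applies this computation directly to an arbitrary $\boldsymbol{u}$ with $\lvert\boldsymbol{u}\rvert\geqslant d$. Every such $P_{\boldsymbol{u}}$ contains the factor $1-z$, and its remaining factors $1-q^jz$ with $j<m$ give entries in $(0,1/q]$. So your detour through the greedy vectors and Lemma~\ref{lem:greedy} is valid but unnecessary.
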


\begin{proof}
Since $\omega_d(\boldsymbol{u})=0$ when $\lvert\boldsymbol{u}\rvert<d$,
we may assume $s=\lvert \boldsymbol{u}\rvert\geqslant d$. Then
$s\geqslant1$, so $P_{\boldsymbol{u}}(z)=(1-z)\prod_{a\in A}(1-az)$, where $A$ is a
multiset of $s-1$ numbers $q^j$ with $0\leqslant j<m$.
Put $r=s-d$ and $X=\{a/q^m:a\in A\}$. Canceling $1-z$ in
\eqref{eq:coefficient} gives
\[
 \omega_d(\boldsymbol{u})=(q^m-1)q^{mr} C_r(X),\qquad r\geqslant0.
\]
Every entry of $X$ belongs to $(0,1/q]\subseteq(0,1/2]$. For $d=1$, the multiset $X$ has $r$ entries and
$C_r(X)=\prod_{x\in X}(1-x)>0$.
For $d=2$, $X$ has $r+1$ entries; omitting the top-degree term gives
\[
 C_r(X)=\prod_{x\in X}(1-x)
            -(-1)^{r+1}\prod_{x\in X}x.
\]
For even $r$ this is positive, and for odd $r$ it is nonnegative
because $1-x\geqslant x$ for each entry. In both cases the displayed
formula for $\omega_d(\boldsymbol{u})$ gives the assertion.
\end{proof}

\section{The Binary Case and the Proof of the Main Theorem}
\label{sec:binary}

Throughout this section $q=2$. A block of length $a$ contributes the
normalized entries
\[
 \frac{1}{q^m},\frac{2}{q^m},\ldots,\frac{2^{a-1}}{q^m}.
\]
The only entries greater than $1/3$ are those equal to $1/2$;
they occur exactly at the end of a block of length $m$. The next
lemma restricts where such an entry can occur under the single test.

\begin{lemma}\label{lem:half-entry}
Suppose $m\geqslant2$, $S(\boldsymbol{g}_{d+1})\leqslant q^m+1$, and
$\alpha_j=q^m/2$ for some $j>d+1$. Then $n_1=m$ and
$d\leqslant m$.
\end{lemma}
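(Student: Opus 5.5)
The plan is first to reduce to $q=2$ using only the statement, and then to locate the entry $\alpha_j$ inside the block structure and compare $S(\boldsymbol{g}_{d+1})$ with $q^m+1$.

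\emph{Reduction to $q=2$.} By construction of $\alpha$, every $\alpha_j$ has the form $q^k$ with $0\leqslant k\leqslant n_i-1\leqslant m-1$ for the block $i$ containing row $j$. The hypothesis $\alpha_j=q^m/2$ then gives $q^{m-k}=2$ with $m-k\geqslant1$. Since $q$ is a prime power, this forces $q=2$ and $k=m-1$. So the lemma holds vacuously unless $q=2$, in keeping with the standing assumption of this section. From now on $q=2$ and $\alpha_j=2^{m-1}$.

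\emph{Locating the entry.} Since $k=m-1\leqslant n_i-1\leqslant m-1$, the block $i$ containing row $j$ has $n_i=m$. Moreover, row $j$ is the last row of that block. Because $m\geqslant n_1\geqslant\cdots\geqslant n_i=m$, we get $n_1=\cdots=n_i=m$, which already proves $n_1=m$.

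\emph{Proving $d\leqslant m$.} Suppose instead $d\geqslant m+1$, so $d+1\geqslant m+2$.
\begin{itemize}
\item If $i=1$, then $j=m<d+1$, contradicting $j>d+1$. Hence $i\geqslant2$, and so $n_2=m$.
\item Since $d+1\geqslant m+2>n_1$, the greedy vector $\boldsymbol{g}_{d+1}$ fills block $1$ completely. This contributes $2^m-1$ to $S$.
\item The vector $\boldsymbol{g}_{d+1}$ also places at least $d+1-m\geqslant2$ units in the later blocks. Block $2$ has length $m\geqslant2$, so at least its first two rows are used, contributing at least $1+2=3$. This holds whether or not block $2$ is filled.
\end{itemize}
Therefore $S(\boldsymbol{g}_{d+1})\geqslant(2^m-1)+3=2^m+2>q^m+1$, contradicting the single test. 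Hence $d\leqslant m$.

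The only delicate point is the first step, deducing $q=2$ from $\alpha_j=q^m/2$. The rest is bookkeeping with the greedy filling, and the hypothesis $m\geqslant2$ is used only to guarantee that block $2$ has at least two rows.
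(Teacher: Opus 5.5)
Your proof is correct and follows essentially the same route as the paper. You identify the entry $q^m/2$ as the last entry of a length-$m$ block to get $n_1=m$, exclude the first block, and show that $d\geqslant m+1$ would force $S(\boldsymbol{g}_{d+1})\geqslant(2^m-1)+1+2=2^m+2$. Your preliminary deduction of $q=2$ is a harmless extra, since the paper states the lemma under that section's standing assumption $q=2$.
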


\begin{proof}
An entry $q^m/2$ is the last entry of a block of length $m$.
Since the block lengths are nonincreasing, all preceding blocks also
have length $m$; in particular $n_1=m$. If this entry lies in the first block, its
position is $m\geqslant d+2$, hence $d\leqslant m-2$. Otherwise the first two
blocks have length $m$. If $d\geqslant m+1$, the first $d+1$ entries
contain all $m$ entries of the first block and the first two entries
of the second block, so their sum is at least
\[
 (q^m-1)+1+2=q^m+2,
\]
contrary to $S(\boldsymbol{g}_{d+1})\leqslant q^m+1$.
\end{proof}

For $m\geqslant4$, the normalized entries satisfy a sharper estimate.

\begin{lemma}\label{lem:binary-odd}
Assume $m\geqslant4$ and
$S(\boldsymbol{g}_{d+1})\leqslant q^m+1$. For odd
$r=s-d\geqslant3$ with $s\leqslant N$, let
\[
 W=\{\alpha_2/q^m,\ldots,\alpha_{s-2}/q^m\},
 \qquad x=\frac{\alpha_{s-1}}{q^m},\quad
 y=\frac{\alpha_s}{q^m}.
\]
Then
\[
 r(1-x-y)-e_1(W)\geqslant\frac{3}{q^m}.
\]
The constant $3$ is best possible.
\end{lemma}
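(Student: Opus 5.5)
The plan is to clear denominators and prove the equivalent integer inequality
\[
 r\bigl(2^m-\alpha_{s-1}-\alpha_s\bigr)-\sum_{j=2}^{s-2}\alpha_j\geqslant 3 .
\]
First split $\sum_{j=2}^{s-2}\alpha_j$ into the contribution of $B$ (positions $2,\ldots,d+1$) and the $r-3$ middle positions $d+2,\ldots,s-2$. The single test in the form \eqref{eq:initial-sum} gives $\sum_{j=2}^{d+1}\alpha_j\leqslant 2^m$. It then suffices to show
\[
 \Phi:=r\bigl(2^m-T\bigr)-\sum_{j=d+2}^{s-2}\alpha_j\geqslant 2^m+3,
 \qquad T:=\alpha_{s-1}+\alpha_s .
\]
Rewrite this as
\[
 \Phi=3\bigl(2^m-T\bigr)+\sum_{j=d+2}^{s-2}\bigl(2^m-T-\alpha_j\bigr),
\]
and bound the two pieces using the block structure of $\alpha$. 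Within a block the entries are $1,2,\ldots,2^{a-1}$ with $a\leqslant m$. Blocks are nonincreasing in length, and every entry is at most $2^{m-1}$.

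Next, split into cases according to where positions $s-1$ and $s$ lie.

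In case (a), both lie in one block, so $\alpha_s=2\alpha_{s-1}$ and $T=3\alpha_{s-1}$. The middle entries in the same block immediately before position $s-1$ are $\alpha_{s-1}/2,\alpha_{s-1}/4,\ldots$, and each term $2^m-T-\alpha_j$ is then large. If the block is long enough to contain two middle positions, this already yields the bound.

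In case (b), position $s$ starts a new block, so $\alpha_s=1$ and $T=\alpha_{s-1}+1\leqslant 2^{m-1}+1$. Then $3(2^m-T)\geqslant 3\cdot2^{m-1}-3$, which exceeds $2^m+3$ once $m\geqslant4$; this is exactly where the hypothesis $m\geqslant4$ enters. The middle terms must then be shown nonnegative in total.

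The main obstacle is the middle terms with $\alpha_j>2^m-T$. These arise when an earlier block ends with entries $2^{m-2},2^{m-1}$ while $x,y$ are comparatively large, and when $r=3$ with no middle terms at all in case (a) with $\alpha_{s-1}=2^{m-2}$. For those configurations I would not discard the slack in $B$. Instead I would return to the exact value $S(\boldsymbol g_{d+1})$. I would invoke Lemma~\ref{lem:half-entry}: an entry $2^{m-1}$ beyond position $d+1$ forces $n_1=m$ and $d\leqslant m$. This pins down the first block and makes $\sum_{j\leqslant d+1}\alpha_j=2^{d+1}-1$ much smaller than $2^m+1$ unless $d$ is close to $m$. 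The remaining handful of configurations ($d\in\{m-1,m\}$ with a second block of length $m$) are then checked directly. Pairing each large middle entry with the small entries at the start of the following block is what compensates it.

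Finally, for sharpness I would exhibit a configuration realizing equality. I would look within case (a) with $r=3$ and $\alpha_{s-1}=2^{m-2}$, $\alpha_s=2^{m-1}$, with $B$ tuned so that the single test is an equality. A natural attempt is two blocks of length $m$ with $d$ chosen so that $\sum_{j\leqslant d+1}\alpha_j=2^m+1$, for instance $d=m$. I would then verify that $r(1-x-y)-e_1(W)=3/2^m$ there.
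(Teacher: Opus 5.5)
There is a genuine gap in one subcase. Your case (b), and case (a) with $\alpha_{s-1}\leqslant 2^{m-3}$, can be completed. The remaining subcase of (a) is $\alpha_{s-1}=2^{m-2}$, $\alpha_s=2^{m-1}$, that is $(x,y)=(1/4,1/2)$, and it does not work as planned.

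There, your reduction to $\Phi\geqslant 2^m+3$ replaces $\sum_{j=2}^{d+1}\alpha_j$ with its upper bound $2^m$. That reduced inequality is false at exactly the configurations where the lemma is sharp:
\begin{itemize}
\item For $m=4$, two blocks of length $4$, $d=3$, $s=8$, you get $\Phi=3\cdot4+(4-1)+(4-2)=17<19$. In fact $\sum_{j=2}^{4}\alpha_j=14$, and the true value of $2^m\Delta$ is $17-14=3$.
\item For $m=5$, two blocks of length $5$, $d=5$, $s=10$, you get $\Phi=24+6+4=34<35$, with $\sum_{j=2}^{6}\alpha_j=31$.
\end{itemize}
Neither configuration is among those you reserve for exact treatment. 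The middle entries ($1,2$, resp.\ $2,4$) are below $2^m-T=2^{m-2}$, and $r=5$. They fall instead under your claim that two middle positions in the final block already give the bound. That claim fails for $m\in\{4,5\}$: two such terms contribute $2^{m-3}+3\cdot2^{m-4}=5\cdot2^{m-4}$, while $2^{m-2}+3$ is needed.

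Since equality holds in these configurations with $\sum_{j=2}^{d+1}\alpha_j<2^m$, no argument that discards that slack can work here. The repair is to treat the whole subcase exactly, as the paper does. Because $\alpha_s=2^{m-1}$ lies at position $s>d+1$, Lemma~\ref{lem:half-entry} gives $n_1=m$ and $d\leqslant m$. Hence $s=\ell m$ with the first $\ell$ blocks of length $m$, and
\[
 2^m\Delta=2^{m-2}\bigl(\ell(m-4)+3-d\bigr)+\ell+1 .
\]
This is at least $3$ by $d\leqslant m$, $r\geqslant3$, and, for $m=4$, the parity of $r$. A small correction: $\sum_{j\leqslant d+1}\alpha_j=2^{d+1}-1$ holds only for $d<m$; for $d=m$ it equals $2^m$. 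The paper organizes the whole proof this way. The crude bound $e_1(W)\leqslant(r-1)/2$ disposes of $x+y\leqslant1/2$. For $m\geqslant4$ the only adjacent pairs with $x+y>1/2$ are $(1/4,1/2)$ and $(1/2,1/q^m)$, and both are computed exactly.

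Your sharpness construction is also inconsistent.
\begin{itemize}
\item In the subcase above, $d\leqslant m$ forces $\ell=1$ when $r=3$. Then $d=m-3$ and $2^m\Delta=2^{m-1}+2$, which is not $3$.
\item Two blocks of length $m$ can never make the single test an equality. The sum $\sum_{j\leqslant d+1}\alpha_j$ jumps from $2^m$ at $d=m$ to $2^m+2$ at $d=m+1$.
\item With two blocks and $d=m$ one has $r=m$ and $2^m\Delta=2^{m-2}(m-5)+3$. This equals $3$ only for $m=5$.
\end{itemize}
So equality is attained while the single test is strict: for instance at $m=5$, $d=5$, $s=10$, or at the paper's example $m=4$, $d=3$, $s=8$. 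Tuning $B$ toward equality in \eqref{eq:initial-sum} is therefore the wrong guide to the extremal cases.
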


\begin{proof}
Put $\Delta=r(1-x-y)-e_1(W)$. The multiset
$W$ consists of $B$ and $r-3$ further normalized entries, each at
most $1/2$. By~\eqref{eq:initial-sum},
$e_1(W)\leqslant(r-1)/2$. Thus if $x+y\leqslant1/2$, then
$\Delta\geqslant1/2\geqslant3/q^m$.

Suppose $x+y>1/2$. As $m\geqslant4$, adjacent normalized entries
have sum greater than $1/2$ only for
$(x,y)=(1/4,1/2)$, the final two entries of a length-$m$ block,
or $(x,y)=(1/2,1/q^m)$, the pair crossing the boundary after such a
block. In either case a normalized entry $1/2$ occurs beyond the
first $d+1$ positions. By Lemma~\ref{lem:half-entry}, $d\leqslant m$.

\textit{Case $(x,y)=(1/4,1/2)$.} The first $s$ entries comprise
$\ell$ blocks of length $m$ for some $\ell\geqslant1$. Hence
$s=\ell m$, $r=\ell m-d$, and
\[
 \Delta=\frac{\ell(m-4)+3-d}{4}+\frac{\ell+1}{q^m}.
\]
If $m=4$, the oddness of $r$ gives $d\in\{1,3\}$.
For $d=3$ the condition $r\geqslant3$ requires $\ell\geqslant2$;
for $d=1$ the first term is $1/2$. Thus
$\Delta\geqslant3/q^m$. If $m\geqslant5$ and $\ell\geqslant2$,
then
\[
 \Delta\geqslant\frac{m-5}{4}+\frac{\ell+1}{q^m}
 \geqslant\frac{3}{q^m}.
\]
If instead $m\geqslant5$ and $\ell=1$, then $r\geqslant3$ gives
$\Delta=(r-1)/4+2/q^m\geqslant3/q^m$.

\textit{Case $(x,y)=(1/2,1/q^m)$.} Now $s=\ell m+1$ and
$r=\ell m+1-d$, where the first $\ell$ blocks have length $m$.
A direct calculation gives
\[
\begin{aligned}
 \Delta={}&(m-d)\bigl(\tfrac12-\tfrac{1}{q^m}\bigr)+\tfrac{1}{q^m}\\
 &+(\ell-1)\bigl(\tfrac{m-2}{2}-\tfrac{m-1}{q^m}\bigr).
\end{aligned}
\]
For $\ell=1$, the restriction $r\geqslant3$ implies
$m-d\geqslant2$, and therefore $\Delta\geqslant1-1/q^m\geqslant3/q^m$.
For $\ell\geqslant2$, use $d\leqslant m$ and
$(m-2)/2-(m-1)/q^m\geqslant2/q^m$ (valid for $m\geqslant4$)
to obtain $\Delta\geqslant3/q^m$.

Finally, for $m=4$, two blocks of length $4$, $d=3$, and $s=8$,
one has $r=5$, $x=1/4$, $y=1/2$, and $e_1(W)=17/16$.
Thus $\Delta=3/16$, proving sharpness.
\end{proof}

\begin{proposition}\label{prop:binary}
If $q=2$, then condition~\ref{it:single-test} of
Theorem~\ref{thm:main} implies condition~\ref{it:all-nonnegative}.
\end{proposition}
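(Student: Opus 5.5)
By Lemma~\ref{lem:greedy} and \eqref{eq:normalized}, it suffices to show $C_r(X_s)\geqslant0$ for $d+1\leqslant s\leqslant N$, with $r=s-d$. By Lemma~\ref{lem:small-distance} we may assume $d\geqslant3$. For even $r$ the bound is Lemma~\ref{lem:symmetric}, since every normalized entry lies in $[0,1/2]$. For $r=1$ it is \eqref{eq:initial-sum}. So only odd $r\geqslant3$ remains, and I would induct in steps of two, as in Lemma~\ref{lem:one-third}.

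Set $W=X_{s-2}$, $x=\alpha_{s-1}/q^m$ and $y=\alpha_s/q^m$. By \eqref{eq:two-factor} and the induction hypothesis $C_{r-2}(W)\geqslant0$, it is enough to prove
\[
 (1-x-y)e_{r-1}(W)\geqslant e_r(W).
\]
By Lemma~\ref{lem:symmetric}, $re_r(W)\leqslant e_1(W)e_{r-1}(W)$. Hence the displayed inequality follows from $r(1-x-y)\geqslant e_1(W)$. This is exactly what Lemma~\ref{lem:binary-odd} supplies when $m\geqslant4$, so the case $m\geqslant4$ closes directly.

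The remaining cases are $m\leqslant3$; recall $d\geqslant3$. The single test gives $\sum_{x\in B}x\leqslant1$.
\begin{itemize}
\item For $m=1$, all entries equal $1/2$. The test forces $d\leqslant2$, since three entries would sum to $3/2>1$, so this case is already handled.
\item For $m=2$, the blocks contribute the entries $1/4,1/2$, and $B$ consists of $d$ of these. Every block has length $1$ or $2$, and any entry $1/2$ beyond position $d+1$ would need $d\leqslant m=2$ by Lemma~\ref{lem:half-entry}. So all entries after $B$ equal $1/4$, and Lemma~\ref{lem:one-third} applies since $1/4\leqslant1/3$.
\item For $m=3$, the same use of Lemma~\ref{lem:half-entry} shows that every entry beyond position $d+1$ is at most $1/4$ whenever $d\geqslant4$. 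Lemma~\ref{lem:one-third} then finishes. This leaves $d=3=m$ with $n_1=3$. There the only possible entry $1/2$ beyond $B$ requires a second block of length $3$, and the test forces $n_2\leqslant1$ or similar. I would check this configuration directly, or verify the inequality $r(1-x-y)\geqslant e_1(W)$ by hand as in the proof of Lemma~\ref{lem:binary-odd}.
\end{itemize}

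The main obstacle is the bookkeeping in these small-$m$ cases, where Lemma~\ref{lem:binary-odd} is unavailable. I expect this to need a short direct computation. If the crude bound $re_r\leqslant e_1e_{r-1}$ is too weak there, I would fall back on Lemma~\ref{lem:complement} with $a=1/4$.
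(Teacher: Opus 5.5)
Your proposal has a genuine gap in the one case you left open: $m=d=3$ with an entry $1/2$ beyond position $d+1$. The rest of your argument matches the paper's proof: the reduction, even $r$, $r=1$, $m\in\{1,2\}$, the case $m\geqslant4$ via Lemma~\ref{lem:binary-odd}, and Lemma~\ref{lem:one-third} whenever no entry $1/2$ lies beyond $B$.

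\begin{itemize}
\item Your claim that the single test forces $n_2\leqslant1$ is false. For $(n_1,n_2)=(3,3)$ and $d=3$ one has $S(\boldsymbol{g}_4)=1+2+4+1=8\leqslant9$. Further blocks of length $3$ may follow, so this is an infinite family and cannot be settled by checking one configuration.
\item Your first remedy, verifying $r(1-x-y)\geqslant e_1(W)$ by hand, fails here. For $s=6$ and $r=3$ one gets $W=\{1/4,1/2,1/8\}$, $x=1/4$, $y=1/2$, so $r(1-x-y)=3/4<7/8=e_1(W)$. The bound $re_r\leqslant e_1e_{r-1}$ is genuinely too weak in this case, which is why Lemma~\ref{lem:binary-odd} requires $m\geqslant4$.
\item Your fallback, Lemma~\ref{lem:complement} with $a=1/4$, does not apply. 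Here $W\supseteq B$ always contains $\alpha_3/8=1/2$, and it contains further halves as $r$ grows.
\end{itemize}

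The missing observation is that for $d=3$ one has $\lvert W\rvert=d+r-3=r$. Lemma~\ref{lem:complement} with $L=r$ and $a=1/2$ then gives $e_r(W)\leqslant e_{r-1}(W)/(2r)\leqslant e_{r-1}(W)/6$. For $m=3$, the normalized entries within a block form an initial segment of $(1/8,1/4,1/2)$, and each new block starts at $1/8$. Hence every adjacent pair satisfies $x+y\leqslant3/4$, so $1-x-y\geqslant1/4>1/6$. This yields $e_r(W)\leqslant(1-x-y)e_{r-1}(W)$, and the induction via \eqref{eq:two-factor} closes as in the other cases.
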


\begin{proof}
We first reduce to the case in which $q^m/2$ occurs among
$\alpha_{d+2},\ldots,\alpha_N$, and then prove $C_r(X_s)\geqslant0$ by
induction on odd $r$.

The cases $d\leqslant2$ follow from Lemma~\ref{lem:small-distance}.
Assume $d\geqslant3$. If $m=1$, then $S(\boldsymbol{g}_{d+1})=d+1$, and
the hypothesis gives $d+1\leqslant q^m+1=3$, a contradiction; hence
$m\geqslant2$.

If $\alpha_j\ne q^m/2$ for all $j>d+1$, then $\alpha_j/q^m\leqslant1/4$
for all $j>d+1$. In this case
Lemma~\ref{lem:one-third}, together with \eqref{eq:normalized}
and Lemma~\ref{lem:greedy}, proves the assertion.
We may therefore assume that $\alpha_j=q^m/2$ for some $j>d+1$.
Lemma~\ref{lem:half-entry} then gives $d\leqslant m$,
so $m\geqslant3$.

For $s\geqslant d+1$ write $r=s-d$ and use the multiset $X_s$
from Section~\ref{sec:reduction}. If $r$ is even, then
$C_r(X_s)>0$ by Lemma~\ref{lem:symmetric}, because every entry of
$X_s$ is less than $1$. The odd case is
proved by induction on $r$. At $r=1$,
$C_1(X_{d+1})=1-e_1(B)\geqslant0$ by \eqref{eq:initial-sum}.
For odd $r\geqslant3$, set
\[
 W=X_{s-2},\qquad x=\frac{\alpha_{s-1}}{q^m},
 \qquad y=\frac{\alpha_s}{q^m}.
\]
By \eqref{eq:two-factor},
\[
 \begin{split}
 C_r(X_s)={}&(1-x)(1-y)C_{r-2}(W)\\
 &+(1-x-y)e_{r-1}(W)-e_r(W).
 \end{split}
\]
Since $W=X_{s-2}$, the first term is nonnegative by the induction
hypothesis. It remains to show that
$e_r(W)\leqslant(1-x-y)e_{r-1}(W)$.

\textit{Case $m\geqslant4$.} By Lemmas~\ref{lem:symmetric}
and~\ref{lem:binary-odd},
$r e_r(W)\leqslant e_1(W)e_{r-1}(W)\leqslant r(1-x-y)e_{r-1}(W)$.

\textit{Case $m=3$.} Here $d\geqslant3$ and $d\leqslant m$
force $d=3$. Within each block, the normalized entries are an initial segment
of $(1/8,1/4,1/2)$; across a boundary, the new entry is $1/8$.
Every adjacent pair therefore satisfies $x+y\leqslant3/4$,
and $\lvert W\rvert=r+d-3=r$. Since the entries of $W$ are at most
$1/2$, Lemma~\ref{lem:complement} gives
\[
 e_r(W)\leqslant\frac{1}{2r}e_{r-1}(W)
 \leqslant\frac16 e_{r-1}(W)
 \leqslant(1-x-y)e_{r-1}(W).
\]
This completes the induction, so $C_r(X_s)\geqslant0$ for all odd $r$.
By \eqref{eq:normalized}, $\omega_d(\boldsymbol{g}_s)\geqslant0$ for
$d+1\leqslant s\leqslant N$, and Lemma~\ref{lem:greedy} gives the
assertion for every $\boldsymbol{u}\in\mathcal I$.
\end{proof}

\begin{proof}[Proof of Theorem~\ref{thm:main}]
The implication
\ref{it:single-test}$\Rightarrow$\ref{it:all-nonnegative} is
Proposition~\ref{prop:q-at-least-3} for $q\geqslant3$ and
Proposition~\ref{prop:binary} for $q=2$. Conversely, assume~\ref{it:all-nonnegative}.
Applying it to $\boldsymbol{g}_{d+1}$ and using Lemma~\ref{lem:coefficient} gives
\[
 0\leqslant\omega_d(\boldsymbol{g}_{d+1})=(q^m-1)(q^m+1-S(\boldsymbol{g}_{d+1})).
\]
Since $q^m>1$, this is condition~\ref{it:single-test}.
\end{proof}

\section{The Dual Conditions and Comparison With Sphere Packing}
\label{sec:dual}

\subsection{The Dual Formal Support Counts}

The dual necessary conditions \cite[Corollary~VI.10]{BGR2021}
concern the numbers $\widehat\omega(\boldsymbol{u})$ for
$\boldsymbol{u}\in \mathcal I$, which we write as
$\widehat\omega_d(\boldsymbol{u})$:
\[
 \widehat\omega_d(\boldsymbol{u})
 :=\sum_{\ell=N-d+2}^{\lvert \boldsymbol{u}\rvert}
       (q^{m(\ell-N+d-1)}-1)f_\ell(\boldsymbol{u}).
\]
Comparison with the definition of $\omega_d$ shows that
$\widehat\omega_d=\omega_{N-d+2}$. For $d\leqslant2$ no dual test is
needed: $\widehat\omega_1$ vanishes identically, and
$\widehat\omega_2(\boldsymbol{u})$ vanishes unless
$\lvert \boldsymbol{u}\rvert=N$, in which case it equals $q^m-1$. For
$d\geqslant3$ we have $N-d+2<N$, so Theorem~\ref{thm:main} applies to
$\omega_{N-d+2}$.

\begin{corollary}\label{cor:dual}
Suppose $3\leqslant d<N$. Then
\[
 \widehat\omega_d(\boldsymbol{u})\geqslant0\text{ for all }\boldsymbol{u}\in \mathcal I
 \quad\Longleftrightarrow\quad
 S(\boldsymbol{g}_{N-d+3})\leqslant q^m+1.
\]
\end{corollary}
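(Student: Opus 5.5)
The plan is to apply Theorem~\ref{thm:main} directly with the distance parameter $d':=N-d+2$ in place of $d$. The argument has three steps.

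First, I will record the identification $\widehat\omega_d=\omega_{N-d+2}$ made above. The summation ranges agree: both run over $\ell$ from $N-d+2$ to $\lvert\boldsymbol{u}\rvert$. The weights also agree, since $\ell-N+d-1=\ell-d'+1$, which is exactly the exponent appearing in $\omega_{d'}$. Hence the family $\widehat\omega_d(\boldsymbol{u})\geqslant0$ for all $\boldsymbol{u}\in\mathcal I$ is literally the family $\omega_{d'}(\boldsymbol{u})\geqslant0$ for all $\boldsymbol{u}\in\mathcal I$.

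Second, I will check that $d'$ lies in the range allowed by Theorem~\ref{thm:main}, namely $1\leqslant d'<N$, with the same $q$, $m$ and block lengths $n_1\geqslant\cdots\geqslant n_t$. The hypothesis $d<N$ gives $d'=N-d+2\geqslant3\geqslant1$. The hypothesis $d\geqslant3$ gives $d'\leqslant N-1<N$. This is the only place the assumption $3\leqslant d$ is used, and it matches the remark that the cases $d\leqslant2$ need no dual test.

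Third, I will apply the theorem. Theorem~\ref{thm:main} for the parameter $d'$ says that all $\omega_{d'}(\boldsymbol{u})$ are nonnegative if and only if $S(\boldsymbol{g}_{d'+1})\leqslant q^m+1$. Since $d'+1=N-d+3$, this is exactly the stated condition. The greedy vector is well defined because $N-d+3\leqslant N$ when $d\geqslant3$.

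I do not expect any real obstacle here. The only points needing care are the bookkeeping of the index shift in the weights and the verification of the range $1\leqslant d'<N$, including the boundary case $d=3$, where $d'=N-1$ and $\boldsymbol{g}_{N}$ is the full vector. That boundary case is covered because Theorem~\ref{thm:main} explicitly includes $d'+1=N$.
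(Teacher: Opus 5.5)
Your proposal is correct and follows the paper's proof exactly: set $D=N-d+2$, note that $\widehat\omega_d=\omega_D$ identically, check that $3\leqslant d<N$ gives $1\leqslant D<N$, and apply Theorem~\ref{thm:main} with distance parameter $D$, which yields the condition on $\boldsymbol{g}_{D+1}=\boldsymbol{g}_{N-d+3}$. Your extra bookkeeping of the exponent shift and of the boundary case $d=3$ is accurate.
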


\begin{proof}
Put $D=N-d+2$. The hypotheses give $1\leqslant D<N$, and
$\widehat\omega_d(\boldsymbol{u})=\omega_D(\boldsymbol{u})$ identically for all $\boldsymbol{u}$.
Apply Theorem~\ref{thm:main} with distance parameter $D$.
\end{proof}

\begin{corollary}\label{cor:joint}
Suppose $3\leqslant d<N$ and put $K:=\max\{d+1,N-d+3\}$.
The numbers $\omega_d(\boldsymbol{u})$ and
$\widehat\omega_d(\boldsymbol{u})$ are both nonnegative for every
$\boldsymbol{u}\in\mathcal I$ if and only if
\[
 S(\boldsymbol{g}_K)\leqslant q^m+1.
\]
In particular, this inequality is necessary for the existence of an
$\mathbb F_q$-linear MSRD code with these parameters.
\end{corollary}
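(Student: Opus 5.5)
By Theorem~\ref{thm:main} and Corollary~\ref{cor:dual}, the joint condition holds precisely when both single tests hold:
\[
S(\boldsymbol{g}_{d+1})\leqslant q^m+1
\quad\text{and}\quad
S(\boldsymbol{g}_{N-d+3})\leqslant q^m+1 .
\]
The hypothesis $3\leqslant d<N$ is exactly what these two results require: Theorem~\ref{thm:main} needs $1\leqslant d<N$, and Corollary~\ref{cor:dual} needs $3\leqslant d<N$. So the equivalence reduces to showing that the conjunction of the two inequalities is the single inequality $S(\boldsymbol{g}_K)\leqslant q^m+1$ with $K=\max\{d+1,N-d+3\}$.

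The key step is monotonicity of $k\mapsto S(\boldsymbol{g}_k)$. Passing from $\boldsymbol{g}_k$ to $\boldsymbol{g}_{k+1}$ raises one coordinate $u_i$ by one. This increases $S$ by $q^{u_i}\geqslant1>0$, since $\genfrac{[}{]}{0pt}{}{u+1}{1}_q-\genfrac{[}{]}{0pt}{}{u}{1}_q=q^u$. Equivalently, $S(\boldsymbol{g}_k)=\sum_{j\leqslant k}\alpha_j$ with all $\alpha_j>0$.

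The index $K$ lies in $\{0,\ldots,N\}$, so $\boldsymbol{g}_K$ is defined. Indeed $d+1\leqslant N$, and $N-d+3\leqslant N$ because $d\geqslant3$.

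Since $S(\boldsymbol{g}_k)$ is nondecreasing, $S(\boldsymbol{g}_K)=\max\{S(\boldsymbol{g}_{d+1}),S(\boldsymbol{g}_{N-d+3})\}$. Hence $S(\boldsymbol{g}_K)\leqslant q^m+1$ if and only if both single tests hold, which gives the asserted equivalence.

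For the final sentence, take an $\mathbb F_q$-linear MSRD code with these parameters. By \cite[Theorem~VI.5 and Corollary~VI.6]{BGR2021}, $\omega_d(\boldsymbol{u})$ counts codewords with a prescribed nonzero support, so it is nonnegative. At $\boldsymbol{u}=\boldsymbol 0$ it equals $0$. By \cite[Corollary~VI.10]{BGR2021}, the numbers $\widehat\omega_d(\boldsymbol{u})$ count the corresponding codewords of the dual code, which is again MSRD, so they are nonnegative as well. Necessity of the inequality then follows from the equivalence just proved.

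There is no real obstacle in this argument. The only points needing care are the range check on $K$ and the identification $\widehat\omega_d=\omega_{N-d+2}$, both of which are already recorded before Corollary~\ref{cor:dual}.
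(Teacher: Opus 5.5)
Your proposal is correct and follows the paper's proof essentially step for step: you reduce to the two single tests via Theorem~\ref{thm:main} and Corollary~\ref{cor:dual}, use $S(\boldsymbol{g}_{k+1})-S(\boldsymbol{g}_k)=q^a>0$ to see that the larger index dominates, and cite \cite[Corollaries~VI.6 and~VI.10]{BGR2021} for necessity. Your explicit check that $K\leqslant N$ is a small detail the paper leaves implicit.
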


\begin{proof}
By Theorem~\ref{thm:main} and Corollary~\ref{cor:dual}, the two
families are equivalent to $S(\boldsymbol{g}_{d+1})\leqslant q^m+1$ and
$S(\boldsymbol{g}_{N-d+3})\leqslant q^m+1$, respectively.
For $0\leqslant k<N$, the vector $\boldsymbol{g}_{k+1}$ is obtained from
$\boldsymbol{g}_k$ by increasing one coordinate from some value $a$ to
$a+1$, so
\[
 S(\boldsymbol{g}_{k+1})-S(\boldsymbol{g}_k)=q^a>0.
\]
Thus $S(\boldsymbol{g}_k)$ is strictly increasing in $k$, and the two
inequalities together are equivalent to the one with the larger index,
namely $S(\boldsymbol{g}_K)\leqslant q^m+1$. The last assertion follows
from the necessary support-distribution conditions
\cite[Corollaries~VI.6 and~VI.10]{BGR2021}.
\end{proof}

\subsection{Comparison With Sphere Packing}

The sphere-packing argument below retains a prescribed number of rows
across the blocks. To optimize this bound, we first determine
which distribution of these rows maximizes $S$.

\begin{lemma}\label{lem:S-greedy}
Let $\boldsymbol{v}\in\mathcal I$ and $k=\lvert\boldsymbol{v}\rvert$.
Then $S(\boldsymbol{v})\leqslant S(\boldsymbol{g}_k)$, with equality if
and only if $\boldsymbol{v}$ is obtained from $\boldsymbol{g}_k$ by
permuting coordinates.
\end{lemma}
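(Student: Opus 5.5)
We use an exchange argument that parallels the proof of Lemma~\ref{lem:greedy}, but now applied directly to $S$, which is a separable sum of strictly convex increasing terms $\phi(u):=(q^{u}-1)/(q-1)$. Since $S$ is symmetric in coordinates that share the same bound $n_i$, and the bounds are nonincreasing, we first sort. If $v_i<v_{i+1}$, interchanging the two coordinates keeps the vector in $\mathcal I$ (because $n_i\geqslant n_{i+1}$) and leaves $S$ unchanged. Hence we may assume $v_1\geqslant\cdots\geqslant v_t$, after a permutation of coordinates.

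Suppose $\boldsymbol v$ is sorted but $\boldsymbol v\neq\boldsymbol g_k$. As in Lemma~\ref{lem:greedy}, let $i$ be the first coordinate with $v_i<n_i$, and let $j>i$ be the last coordinate with $v_j>0$; such a $j$ exists. Set $\boldsymbol v'=\boldsymbol v+\mathbf e_i-\mathbf e_j$. Then
\[
S(\boldsymbol v')-S(\boldsymbol v)=q^{v_i}-q^{v_j-1}>0,
\]
because $v_i\geqslant v_j>v_j-1$. The proof of Lemma~\ref{lem:greedy} already checked that the transfer stays in $\mathcal I$ and keeps the vector nonincreasing. Each transfer raises the vector lexicographically, so the process terminates, and it can only stop at $\boldsymbol g_k$. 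This gives $S(\boldsymbol v)\leqslant S(\boldsymbol g_k)$. Moreover, whenever at least one transfer is needed the inequality is strict.

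For the equality statement, one direction is immediate. If $\boldsymbol v$ is a coordinate permutation of $\boldsymbol g_k$, then $S(\boldsymbol v)=S(\boldsymbol g_k)$, since $S$ depends only on the multiset of coordinates. Conversely, suppose $S(\boldsymbol v)=S(\boldsymbol g_k)$. The sorting steps preserve $S$, and every transfer strictly increases $S$. So the sorted version of $\boldsymbol v$ needs no transfers and is already $\boldsymbol g_k$. Hence $\boldsymbol v$ is a rearrangement of $\boldsymbol g_k$.

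The step that needs care is the phrase ``obtained by permuting coordinates''. The sorting step only uses adjacent swaps with $v_i<v_{i+1}$, and these are always feasible, so the sorted vector is genuinely a permutation of $\boldsymbol v$. The converse direction therefore goes through without any further feasibility check. The remaining point is that the termination state of the transfer process is exactly $\boldsymbol g_k$. This is the characterization already used in Lemma~\ref{lem:greedy}: no positive coordinate lies to the right of the first coordinate that is not full.
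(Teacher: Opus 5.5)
Your proof is correct and matches the paper's argument: you reuse the interchanges and transfers from Lemma~\ref{lem:greedy}, note that interchanges preserve $S$ while each transfer raises it by $q^{v_i}-q^{v_j-1}>0$, and conclude that equality forces the sorted vector to already be $\boldsymbol g_k$. Your check that the adjacent swaps stay in $\mathcal I$, which justifies reading the conclusion as a feasible permutation, is also the point the paper relies on.
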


\begin{proof}
Apply to $\boldsymbol{v}$ the interchanges and transfers in the proof
of Lemma~\ref{lem:greedy}. They use only
$n_1\geqslant\cdots\geqslant n_t$, not the hypothesis on $\omega_d$,
and they lead to $\boldsymbol{g}_k$. An interchange leaves $S$
unchanged, and a transfer
$\boldsymbol{u}\mapsto\boldsymbol{u}+\mathbf e_i-\mathbf e_j$
increases $S$ by $q^{u_i}-q^{u_j-1}$, which is positive because
$u_i\geqslant u_j$. At least one transfer occurs unless the sorted
vector is $\boldsymbol{g}_k$, that is, unless $\boldsymbol{v}$ is a
permutation of $\boldsymbol{g}_k$.
\end{proof}

For a code $C\subseteq\Pi$, let
$C^\perp:=\{Y\in\Pi:\langle X,Y\rangle=0\text{ for every }X\in C\}$
be its dual with respect to the trace-product pairing
$\langle X,Y\rangle:=\sum_{i=1}^{t}\operatorname{Tr}(X_iY_i^{\mathsf T})$
\cite[Definition~V.1]{BGR2021}, where $\operatorname{Tr}$ is the
ordinary matrix trace over $\mathbb F_q$.

\begin{proposition}\label{prop:sphere-comparison}
Let $A\subseteq\Pi$ be a code of minimum distance $d_A$, where
$3\leqslant d_A<N$, and put $L:=N-d_A+3$. Then
\[
 \lvert A\rvert\bigl(1+(q^m-1)S(\boldsymbol{v})\bigr)\leqslant q^{mL}
\]
for every $\boldsymbol{v}\in\mathcal I$ with
$\lvert\boldsymbol{v}\rvert=L$, and the choice
$\boldsymbol{v}=\boldsymbol{g}_L$ gives the strongest of these
inequalities.

If $3\leqslant d<N$ and $C\subseteq\Pi$ is an $\mathbb F_q$-linear
MSRD code of minimum distance $d$, then the inequality above for $A=C$
and $\boldsymbol{v}=\boldsymbol{g}_{N-d+3}$ is equivalent to
$S(\boldsymbol{g}_{N-d+3})\leqslant q^m+1$, and the one for
$A=C^\perp$ and $\boldsymbol{v}=\boldsymbol{g}_{d+1}$ is equivalent to
$S(\boldsymbol{g}_{d+1})\leqslant q^m+1$. Together these are
equivalent to the necessary inequality of Corollary~\ref{cor:joint}.
\end{proposition}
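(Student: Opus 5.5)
The plan is a projective sphere-packing argument applied to a punctured code. Fix $\boldsymbol{v}\in\mathcal I$ with $\lvert\boldsymbol{v}\rvert=L$.

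\textbf{Step 1: puncturing.} Choose subspaces $V_i\subseteq\mathbb F_q^{n_i}$ with $\dim V_i=v_i$; for instance, take the first $v_i$ rows of block $i$. Let $\pi$ delete all other rows, so $\pi$ maps $\Pi$ onto $\bigoplus_i\mathbb F_q^{v_i\times m}$. Deleting $N-L=d_A-3$ rows lowers the sum-rank by at most $d_A-3$. Hence $\pi$ is injective on $A$, and $\pi(A)$ has minimum distance at least $3$.

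\textbf{Step 2: counting.} Count the elements of sum-rank exactly $1$ in the punctured space. Block $i$ contributes the rank-one $v_i\times m$ matrices, of which there are
\[
 \frac{(q^{v_i}-1)(q^m-1)}{q-1}.
\]
Summing over $i$ gives $(q^m-1)S(\boldsymbol{v})$. Balls of radius $1$ around the points of $\pi(A)$ are pairwise disjoint, because the distance is at least $3$. Each ball has $1+(q^m-1)S(\boldsymbol{v})$ elements, and the punctured space has $q^{mL}$ elements. This gives the stated inequality. By Lemma~\ref{lem:S-greedy}, $S(\boldsymbol{v})\leqslant S(\boldsymbol{g}_L)$, so $\boldsymbol{v}=\boldsymbol{g}_L$ gives the strongest bound. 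Note that this part uses only that $A$ is a set, not that it is linear.

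\textbf{Step 3: the code $C$.} Let $C$ be MSRD with minimum distance $d$. Then $\lvert C\rvert=q^{m(N-d+1)}$ and $L=N-d+3$. The inequality becomes
\[
 1+(q^m-1)S\leqslant q^{2m}.
\]
Dividing by $q^m-1$, this is $S\leqslant q^m+1$ with $S=S(\boldsymbol{g}_{N-d+3})$.

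\textbf{Step 4: the dual $C^\perp$.} Here I need the duality fact \cite[Theorem~V.?]{BGR2021}: the dual of a linear MSRD code is MSRD of minimum distance $N-d+2$. This is the one step that needs care. It is valid for a common column number $m$, and the dimension count $mN-m(N-d+1)=m(d-1)=m(N-(N-d+2)+1)$ is consistent with it.

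Put $d_A=N-d+2$. The hypothesis $3\leqslant d<N$ gives $3\leqslant d_A\leqslant N-1$. Then $L=N-d_A+3=d+1$ and $\lvert C^\perp\rvert=q^{m(d-1)}$. The bound reduces in the same way to $S(\boldsymbol{g}_{d+1})\leqslant q^m+1$.

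\textbf{Step 5: conclusion.} Both inequalities together are equivalent to $S(\boldsymbol{g}_K)\leqslant q^m+1$. This follows from the strict monotonicity of $k\mapsto S(\boldsymbol{g}_k)$, as in the proof of Corollary~\ref{cor:joint}.

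The main obstacle is citing the MSRD duality correctly, since everything else is routine.
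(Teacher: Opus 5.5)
Your proposal is correct and follows the paper's proof essentially step for step. It punctures $d_A-3$ rows to get minimum distance at least $3$, applies the radius-one sphere-packing count $1+(q^m-1)S(\boldsymbol{v})$, uses Lemma~\ref{lem:S-greedy} for optimality, specializes via $\lvert A\rvert=q^{m(L-2)}$, and concludes with MSRD duality and the strict monotonicity of $S(\boldsymbol{g}_k)$; the duality fact you need is exactly \cite[Theorem~VI.1]{BGR2021}, which is what the paper cites.
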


\begin{proof}
Let $\boldsymbol{v}\in\mathcal I$ with $\lvert\boldsymbol{v}\rvert=L$.
Delete $n_i-v_i$ rows of block $i$ for every $i$, that is, $d_A-3$
rows in total. Deleting one row lowers matrix rank by at most one,
so puncturing lowers the sum-rank distance between any two codewords
by at most $d_A-3$. Since $d_A-3<d_A$, puncturing is injective on $A$,
and the punctured code has $\lvert A\rvert$ codewords, minimum distance
at least $3$, and an ambient space with $q^{mL}$ elements.

Apply the projective radius-one sphere-packing argument
\cite[Theorem~III.7]{BGR2021} to this punctured code. The radius-one
balls around its codewords are disjoint. Such a ball consists of its
center and the tuples that differ from the center by a rank-one matrix
in a single block; since $\mathbb F_q^{v_i\times m}$ contains
$(q^m-1)\genfrac{[}{]}{0pt}{}{v_i}{1}_q$ matrices of rank one, the ball
has $1+(q^m-1)S(\boldsymbol{v})$ elements. This gives the displayed
inequality, and since $S(\boldsymbol{v})\leqslant S(\boldsymbol{g}_L)$
by Lemma~\ref{lem:S-greedy}, the choice $\boldsymbol{v}=\boldsymbol{g}_L$
gives the strongest one.

If $A$ is MSRD, then $\lvert A\rvert=q^{m(N-d_A+1)}=q^{m(L-2)}$, and the
inequality for $\boldsymbol{v}=\boldsymbol{g}_L$ is equivalent to
$S(\boldsymbol{g}_L)\leqslant q^m+1$. For $A=C$ we have $L=N-d+3$. By
the MSRD duality theorem \cite[Theorem~VI.1]{BGR2021}, $C^\perp$ is
MSRD of minimum distance $D=N-d+2$, and $3\leqslant D<N$ because
$3\leqslant d<N$; for $A=C^\perp$ we therefore have $L=N-D+3=d+1$.
Finally, $S(\boldsymbol{g}_k)$ is strictly increasing in $k$ (see the
proof of Corollary~\ref{cor:joint}), so the two inequalities together
are equivalent to $S(\boldsymbol{g}_K)\leqslant q^m+1$.
\end{proof}

\subsection{The Support-Distribution Conditions Do Not Imply Existence}

The inequality of Corollary~\ref{cor:joint} is necessary for the
existence of a linear MSRD code, but it is not sufficient.

\begin{example}\label{ex:nonexistence}
Let $q\geqslant5$ be an odd prime power, and take $m=1$, $t=N=q+2$,
$n_1=\cdots=n_t=1$, and $d=q$, so that $N-d+2=4$. Since
$S(\boldsymbol{g}_k)=k$, the inequalities of Theorem~\ref{thm:main} and
Corollary~\ref{cor:dual} both hold:
\[
 S(\boldsymbol{g}_{d+1})=q+1=q^m+1,\qquad S(\boldsymbol{g}_{N-d+3})=5\leqslant q^m+1.
\]
Nevertheless, no $\mathbb F_q$-linear MSRD code has these parameters.

\textit{Nonexistence.} For $m=1$ the sum-rank weight is the Hamming
weight, so such a code would be a linear MDS code of length $q+2$,
dimension $3$, and minimum distance $q$. Any three columns of a
generator matrix $G\in\mathbb F_q^{3\times(q+2)}$ are linearly
independent, for otherwise some nonzero codeword would vanish in three
coordinates and have weight at most $q-1$. The columns therefore span
$q+2$ distinct one-dimensional subspaces $A_1,\ldots,A_{q+2}$ of
$\mathbb F_q^3$, no three of which lie in a common plane, that is, a
two-dimensional subspace.

Fix $A_i$. There are exactly $q+1$ planes containing $A_i$, and the
$q+1$ other $A_j$ lie in distinct such planes. Hence every plane
containing one of the $A_i$ contains exactly two of them.

Since $\mathbb F_q^3$ has $q^2+q+1>q+2$ one-dimensional subspaces,
we may choose one, $R$, different from all the $A_i$. Each $A_i$ lies
in exactly one of the $q+1$ planes through $R$, namely $R+A_i$.
Each of these planes contains either zero or two of the $A_i$, so the
$q+2$ subspaces are partitioned into pairs. Thus $q+2$ is even,
contrary to $q$ being odd.

\textit{Direct check of nonnegativity.} The nonnegativity of all
$\omega_q(\boldsymbol{u})$ and $\widehat\omega_q(\boldsymbol{u})$ can
also be verified without Theorem~\ref{thm:main} or
Corollary~\ref{cor:dual}. For $s=\lvert \boldsymbol{u}\rvert$ the
definition gives $f_\ell(\boldsymbol{u})=(-1)^{s-\ell}\binom{s}{\ell}$,
and hence
\[
 \omega_q(\boldsymbol{u})=\begin{cases}
 0,&s<q,\\
 q-1,&s=q,\\
 0,&s=q+1,\\
 q(q-1)^2/2,&s=q+2.
 \end{cases}
\]
For $s<4$ the sum defining $\widehat\omega_q(\boldsymbol{u})$ is empty.
For $s\geqslant4$ put $r=s-4$. Since $P_{\boldsymbol{u}}(z)=(1-z)^s$,
Lemma~\ref{lem:coefficient} with $d=4$ gives
\[
 \widehat\omega_q(\boldsymbol{u})=(q-1)q^r
   \sum_{j=0}^r(-1)^j\binom{s-1}{j}q^{-j}.
\]
Write $a_j=\binom{s-1}{j}q^{-j}$. If $s\leqslant q+1$, then
$a_0\geqslant a_1\geqslant\cdots\geqslant a_r\geqslant0$, so the
alternating sum is nonnegative. If $s=q+2$, then $r=q-2$ is odd and at
least $3$, and $a_1>a_0$ but $a_{j+1}\leqslant a_j$ for $j\geqslant1$;
separating the first four terms gives
\[
 \sum_{j=0}^r(-1)^j a_j
 =\frac{(q-1)(2q-1)}{6q^2}
  +\sum_{k=2}^{(r-1)/2}(a_{2k}-a_{2k+1})>0.
\]

\textit{The endpoint $d=4$.} For the same block parameters and
$3\leqslant d<N$, Corollary~\ref{cor:joint} shows that all
$\omega_d(\boldsymbol{u})$ and $\widehat\omega_d(\boldsymbol{u})$ are
nonnegative exactly when $\max\{d+1,q+5-d\}\leqslant q+1$,
that is, when $4\leqslant d\leqslant q$. At $d=4$, where
$\omega_4=\widehat\omega_q$ and $\widehat\omega_4=\omega_q$, there is
again no linear MSRD code: a linear $[q+2,q-1,4]_q$ MDS code would
have a $3\times(q+2)$ parity-check matrix with every three columns
linearly independent, which the argument above rules out. For $4<d<q$
these numbers are also all nonnegative, but we make no claim about the
existence of codes.
\end{example}

\appendix
\renewcommand{\theequation}{\Alph{section}.\arabic{equation}}
\section{Further Properties of Formal Support Counts}
\label{app:refinements}

The parameters are as in Section~\ref{sec:setup}. Most of the results
below assume the single test; Lemma~\ref{lem:integrality} and
Proposition~\ref{prop:congruences} need no hypothesis on
$S(\boldsymbol{g}_{d+1})$.

\subsection{Strict Positivity}

\begin{corollary}\label{cor:strict}
Suppose $S(\boldsymbol{g}_{d+1})\leqslant q^m+1$. Unless $(q,m,d)=(2,1,2)$,
\[
 \omega_d(\boldsymbol{u})>0\quad\text{whenever }\boldsymbol{u}\in \mathcal I
 \text{ and }\lvert \boldsymbol{u}\rvert\geqslant d+2.
\]
In the exceptional case, all block lengths are $1$, and for
$s=\lvert \boldsymbol{u}\rvert\geqslant2$ one has
\[
 \omega_2(\boldsymbol{u})=\begin{cases}
 1,&s\text{ even},\\
 0,&s\text{ odd}.
 \end{cases}
\]
In every case, the values at $\lvert \boldsymbol{u}\rvert<d$ vanish, those at
$\lvert \boldsymbol{u}\rvert=d$ equal $q^m-1$, and those at $\lvert \boldsymbol{u}\rvert=d+1$
vanish precisely when $S(\boldsymbol{u})=q^m+1$.
\end{corollary}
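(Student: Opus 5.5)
The plan is to strengthen the existing proof of Theorem~\ref{thm:main} so that every inequality at total dimension $s\geqslant d+2$ becomes strict, and then to dispose of the exceptional case and the statements in dimensions at most $d+1$ separately. The last sentence of the corollary is immediate from Lemma~\ref{lem:coefficient}: for $\lvert\boldsymbol{u}\rvert<d$, $=d$ and $=d+1$ it gives $0$, $q^m-1$ and $(q^m-1)(q^m+1-S(\boldsymbol{u}))$, and the last vanishes exactly when $S(\boldsymbol{u})=q^m+1$.

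For strict positivity, first reduce to greedy vectors. Theorem~\ref{thm:main} gives nonnegativity in every dimension, so the first assertion of Lemma~\ref{lem:greedy} applies for each $s\geqslant d+2$ and yields $\omega_d(\boldsymbol{u})\geqslant\omega_d(\boldsymbol{g}_s)$. It therefore suffices to show $C_r(X_s)>0$ for $r=s-d\geqslant2$, by~\eqref{eq:normalized}.

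\emph{Even $r$.} Lemma~\ref{lem:symmetric} gives $C_r(X_s)\geqslant\prod_{x\in X_s}(1-x)>0$, since every entry of $X_s$ lies in $(0,1/q]$.

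\emph{Odd $r\geqslant3$.} Use~\eqref{eq:two-factor} with $W=X_{s-2}$; its first term $(1-x)(1-y)C_{r-2}(W)$ is then the source of strictness.
\begin{itemize}
\item If $r-2\geqslant3$, the induction gives $C_{r-2}(W)>0$.
\item If $r-2=1$, then $C_1(B)=1-e_1(B)$, which may vanish when the single test holds with equality. In that subcase strictness must come from the second part, namely a strict inequality $e_r(W)<(1-x-y)e_{r-1}(W)$.
\end{itemize}
For $q\geqslant3$ this strict inequality holds: Lemma~\ref{lem:symmetric} gives $re_r\leqslant e_1e_{r-1}$ with $e_1(W)\leqslant r/3$ and $x+y\leqslant 2/3$. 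The gap term in~\eqref{eq:er-gap} is strictly positive whenever $e_{r-1}(W)>0$, which holds since $\lvert W\rvert=s-3\geqslant r-1$ and all entries are positive. Hence $e_r(W)<(1-x-y)e_{r-1}(W)$.

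For $q=2$ and $m\geqslant4$, Lemma~\ref{lem:binary-odd} gives the margin $3/q^m>0$, and the case $m=3$ has slack $1/6<1/4$. When no entry $1/2$ lies beyond position $d+1$, the $1/4$ bound gives slack as well.

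The cases $d\leqslant2$ go through Lemma~\ref{lem:small-distance}. For $d=1$ the value is a product of positive factors. For $d=2$ and odd $r$, the value $\prod(1-x)-\prod x$ is positive unless every entry equals $1/2$. That forces $q=2$ and $m=1$, and then all blocks have length $1$. This is exactly the exceptional case $(q,m,d)=(2,1,2)$, where $X$ consists of $r+1$ copies of $1/2$ and the scaled formula gives $\omega_2=1$ or $0$ according to the parity of $s$.

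The main obstacle is bookkeeping. One has to check that each binary subcase really gives a strict inequality, not merely the nonnegativity proved earlier. The most delicate point is $r=3$ with equality in the single test, because there the whole positivity rests on strictness in Lemma~\ref{lem:symmetric} or Lemma~\ref{lem:binary-odd}.
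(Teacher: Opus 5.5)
Your proposal is correct and follows the paper's proof essentially step for step. The steps match: reduction to $C_r(X_s)>0$ via Theorem~\ref{thm:main} and Lemma~\ref{lem:greedy}, Lemma~\ref{lem:symmetric} for even $r$, and Lemma~\ref{lem:small-distance} for $d\leqslant2$. For odd $r\geqslant3$ and $d\geqslant3$, both use \eqref{eq:two-factor} with $C_{r-2}(W)\geqslant0$ and a strict inequality $e_r(W)<(1-x-y)e_{r-1}(W)$ (with $e_{r-1}(W)>0$), split into the same $q\geqslant3$, $m\geqslant4$, $m=3$ and no-$1/2$ cases.

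Two small remarks. You prove that strict inequality for every odd $r\geqslant3$, so your separate induction giving $C_{r-2}(W)>0$ is redundant. Also, the claim that all entries equal to $1/2$ force $(q,m)=(2,1)$ needs a one-line justification: two consecutive equal entries must lie in different blocks, so one of them is a block's first entry $1/q^m$.
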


\begin{proof}
By Theorem~\ref{thm:main} and Lemma~\ref{lem:greedy}, it suffices
to show strict positivity at $\boldsymbol{g}_s$ for $s\geqslant d+2$,
that is, by~\eqref{eq:normalized}, $C_r(X_s)>0$ for $r=s-d\geqslant2$; the statements about
$\lvert\boldsymbol{u}\rvert\leqslant d+1$ follow from
Lemma~\ref{lem:coefficient}.

\textit{Even $r$.} By Lemma~\ref{lem:symmetric},
$C_r(X_s)\geqslant\prod_{x\in X_s}(1-x)>0$.

\textit{The cases $d=1,2$.} These can be read directly
from the proof of Lemma~\ref{lem:small-distance}. For $d=1$,
$C_r(X_s)$ is a strictly positive product. For $d=2$
and odd $r$, it equals
\[
 \prod_{x\in X_s}(1-x)-\prod_{x\in X_s}x.
\]
This is positive unless every entry equals $1/2$.
If $q\geqslant3$, all entries are smaller than $1/2$.
If $q=2$ and $m\geqslant2$, then among $\alpha_2,\ldots,\alpha_s$, where
$s\geqslant4$, some term is smaller than $q^m/2$: if $n_1\geqslant3$,
then $\alpha_2=2<q^m/2$; otherwise $\alpha_{n_1+1}=1$ is the first term
of the second block.
In the remaining case $(q,m,d)=(2,1,2)$, every block has length $1$.
Equation~\eqref{eq:coefficient} gives, for $s\geqslant2$,
\[
 \omega_2(\boldsymbol{u})=2^{s-2}\sum_{j=0}^{s-2}
 (-1)^j\binom{s-1}{j}2^{-j}
 =\frac{1-(-1)^{s-1}}2.
\]

\textit{Odd $r\geqslant3$ and $d\geqslant3$.} Put
$W=X_{s-2}$, $x=\alpha_{s-1}/q^m$, and $y=\alpha_s/q^m$.
The multiset $W$ has $d+r-3\geqslant r-1$ positive entries, so the
right side of~\eqref{eq:er-gap} is positive, that is,
$r e_r(W)<e_1(W)e_{r-1}(W)$.
If $q\geqslant3$, or if $q=2$ and $\alpha_j\ne q^m/2$ for all $j>d+1$,
then $\alpha_j/q^m\leqslant1/3$ for all $j>d+1$. Hence $e_1(W)\leqslant r/3\leqslant r(1-x-y)$.
The strict inequality above and \eqref{eq:two-factor}, with
$C_{r-2}(W)\geqslant0$, imply $C_r(X_s)>0$.

Otherwise $q=2$ and $\alpha_j=q^m/2$ for some $j>d+1$. By
Lemma~\ref{lem:half-entry}, $d\leqslant m$. If $m\geqslant4$,
Lemma~\ref{lem:binary-odd} gives
$e_1(W)\leqslant r(1-x-y)$, and the same strict inequality
implies $C_r(X_s)>0$. If $m=3$, then $d=3$ and
Lemma~\ref{lem:complement} gives
\[
 e_r(W)\leqslant\frac{e_{r-1}(W)}{2r}
 < (1-x-y)e_{r-1}(W),
\]
because $r\geqslant3$ and $x+y\leqslant3/4$.
Thus \eqref{eq:two-factor} again yields $C_r(X_s)>0$.
\end{proof}

\subsection{Lower Bounds in Total Dimension \texorpdfstring{$d+r$}{d+r} for Odd \texorpdfstring{$r$}{r}}

\begin{proposition}\label{prop:binary-odd-refinement}
Suppose $d\geqslant3$, $S(\boldsymbol{g}_{d+1})\leqslant q^m+1$,
and $r\geqslant3$ is odd with $d+r\leqslant N$.
For every $\boldsymbol{u}\in\mathcal I$ of total dimension $d+r$,
the following bounds hold.
\begin{enumerate}[label=(\arabic*)]
\item If $q\geqslant3$, then
\begin{equation}\label{eq:odd-layer-q-ge-3}
 \omega_d(\boldsymbol{u})\geqslant(q^m-1)\frac{r-1}{r}
 \bigl(1+(q-3)q^{m-1}\bigr)\binom{d+r-3}{r-1}.
\end{equation}
Equality holds precisely when $m=1$, $d=q$, $r=3$, and all
block lengths are $1$.
\item If $q=2$ and $m\geqslant3$, then
\[
 \omega_d(\boldsymbol{u})\geqslant
 (2^m-1)\frac{r+2}{r}\binom{d+r-3}{r-1}.
\]
\end{enumerate}
\end{proposition}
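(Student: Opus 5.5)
The plan is to reduce to the greedy vector, apply the two-factor identity, and sharpen the estimate $r e_r\leqslant e_1e_{r-1}$ using the gap identity~\eqref{eq:er-gap}.

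\textbf{Reduction.} Put $s=d+r$. By Theorem~\ref{thm:main}, the single test gives $\omega_d(\boldsymbol{w})\geqslant0$ for every $\boldsymbol{w}$ of total dimension $s-1$. Lemma~\ref{lem:greedy} then yields $\omega_d(\boldsymbol{u})\geqslant\omega_d(\boldsymbol{g}_s)$, and by~\eqref{eq:normalized}
\[
 \omega_d(\boldsymbol{g}_s)=(q^m-1)q^{mr}C_r(X_s).
\]
As in the proof of Proposition~\ref{prop:binary}, set $W=X_{s-2}$, $x=\alpha_{s-1}/q^m$ and $y=\alpha_s/q^m$. In~\eqref{eq:two-factor} the term $(1-x)(1-y)C_{r-2}(W)$ is nonnegative, by Theorem~\ref{thm:main} when $r-2\geqslant1$ and by~\eqref{eq:initial-sum} when $r=3$. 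Hence
\[
 C_r(X_s)\geqslant(1-x-y)e_{r-1}(W)-e_r(W).
\]

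\textbf{Sharpening $r e_r\leqslant e_1e_{r-1}$.} Every normalized entry is at least $1/q^m$. So in~\eqref{eq:er-gap} each inner sum $\sum_{j\in J}x_j$ is at least $(r-1)/q^m$, which gives
\[
 r e_r(W)\leqslant\bigl(e_1(W)-(r-1)q^{-m}\bigr)e_{r-1}(W).
\]
Therefore
\[
 C_r(X_s)\geqslant\frac{e_{r-1}(W)}{r}\Bigl(\Delta+\frac{r-1}{q^m}\Bigr),\qquad \Delta:=r(1-x-y)-e_1(W).
\]
Since $W$ has $d+r-3$ entries, each at least $1/q^m$, we also have $e_{r-1}(W)\geqslant\binom{d+r-3}{r-1}q^{-m(r-1)}$. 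Multiplying by $q^{mr}$ reduces both bounds to a lower bound for $q^m\Delta$.

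\textbf{Case $q\geqslant3$.} By~\eqref{eq:initial-sum}, $e_1(W)\leqslant1+(r-3)/q$, and $x+y\leqslant2/q$. Hence
\[
 \Delta\geqslant r-\frac{2r}{q}-1-\frac{r-3}{q}=(r-1)\Bigl(1-\frac3q\Bigr).
\]
Adding $(r-1)/q^m$ and multiplying by $q^m/r$ gives exactly
\[
 \frac{r-1}{r}\bigl(1+(q-3)q^{m-1}\bigr),
\]
which is~\eqref{eq:odd-layer-q-ge-3}.

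\textbf{Case $q=2$.} If $m\geqslant4$, Lemma~\ref{lem:binary-odd} gives $\Delta\geqslant3/q^m$. Then $\Delta+(r-1)/q^m\geqslant(r+2)/q^m$, which is the claimed bound. If $m=3$, split on the entry $1/2$:
\begin{itemize}[label=$\cdot$]
\item If no entry $1/2$ occurs beyond position $d+1$, all later entries are at most $1/4$, so $e_1(W)\leqslant1+(r-3)/4$ and $x+y\leqslant1/2$. Then $\Delta\geqslant(r-1)/4\geqslant1/2\geqslant3/8$.
\item Otherwise Lemma~\ref{lem:half-entry} forces $d=3$. The only adjacent pairs with $x+y>1/2$ are $(1/4,1/2)$ and $(1/2,1/8)$. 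Compute $\Delta$ directly for these configurations, as in the proof of Lemma~\ref{lem:binary-odd}. I expect $\Delta\geqslant3/8$ once $r\geqslant3$ is used.
\end{itemize}

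\textbf{Equality case (expected main obstacle).} Equality in~\eqref{eq:odd-layer-q-ge-3} requires equality in every step above. These conditions are:
\begin{itemize}[label=$\cdot$]
\item all entries of $W$ equal $1/q^m$, for the binomial bound and the gap bound;
\item $x=y=1/q$ and all $r-3$ entries beyond $B$ equal $1/q$;
\item $e_1(B)=1$;
\item $C_{r-2}(W)=0$;
\item $\omega_d(\boldsymbol{u})=\omega_d(\boldsymbol{g}_s)$.
\end{itemize}
Entries equal to both $1/q^m$ and $1/q$ force $m=1$, so all blocks have length $1$. Then $e_1(B)=d/q=1$ gives $d=q$. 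Since $C_{r-2}(W)>0$ when $r-2\geqslant3$ by Corollary~\ref{cor:strict}, we need $r=3$. For $r=3$, $C_1(W)=1-e_1(B)=0$ indeed vanishes. Conversely, I will verify equality for these parameters directly with the binomial formula $P_{\boldsymbol{u}}=(1-z)^s$, as in Example~\ref{ex:nonexistence}. The delicate points are that the separate inequalities can be tight simultaneously only in this configuration, and the case $q=2$, $m=3$ with an entry $1/2$.
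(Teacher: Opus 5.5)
\textbf{Gap in the case $q=2$, $m=3$ with an entry $1/2$ beyond position $d+1$.}
Apart from this subcase, your proposal is the paper's argument. That includes the reduction to $\boldsymbol{g}_s$, the refined bound $r e_r(W)\leqslant(e_1(W)-(r-1)q^{-m})e_{r-1}(W)$, the estimate $\Delta\geqslant(r-1)(1-3/q)$ for $q\geqslant3$ with its equality analysis, and the binary cases $m\geqslant4$ and $m=3$ without such an entry.

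In this subcase you only say you expect $\Delta\geqslant3/8$, and that is false. Here $d=3$ and $n_1=3$ by Lemma~\ref{lem:half-entry}, and $n_2=3$ because the late $1/2$ must end a later block of length $3$. Whenever $(x,y)=(1/4,1/2)$, that is, $s=3\ell$ with $\ell$ even, one gets $\Delta=(1-\ell)/8<0$.

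Concretely, take $(n_1,n_2)=(3,3)$, $r=3$, $s=6$. Then $W=\{1/4,1/2,1/8\}$, $e_1(W)=7/8$ and $\Delta=-1/8$. So $\Delta+(r-1)/q^m=1/8$, not the needed $(r+2)/q^m=5/8$.

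The loss does not come only from the binomial lower bound on $e_{r-1}(W)$. Even with the exact value $e_2(W)=7/32$, your chain gives only $\omega_3(\boldsymbol{g}_6)\geqslant98/3$, below the required $35$; the true value is $308$. The gap identity \eqref{eq:er-gap} with the uniform lower bound $1/q^m$ on entries is too lossy here. This is consistent with Lemma~\ref{lem:binary-odd} requiring $m\geqslant4$.

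\textbf{The missing idea: drop $\Delta$ in this subcase and use its structure.}
Since $d=3$ and $n_1=3$, $W$ has exactly $r$ entries, all at most $1/2$, and one of them is $\alpha_3/8=1/2$.

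Lemma~\ref{lem:complement} with $L=r$ and $a=1/2$ gives $e_r(W)\leqslant e_{r-1}(W)/(2r)\leqslant e_{r-1}(W)/6$. Every adjacent pair satisfies $x+y\leqslant3/4$, so \eqref{eq:two-factor} yields $C_r(X_s)\geqslant e_{r-1}(W)/12$.

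Of the $r$ products in $e_{r-1}(W)$, the $r-1$ that contain the entry $1/2$ are each at least $4\cdot8^{-(r-1)}$. Hence $e_{r-1}(W)\geqslant(4r-3)8^{-(r-1)}\geqslant3r\cdot8^{-(r-1)}$, and so $C_r(X_s)\geqslant2r/8^r\geqslant(r+2)/8^r$. Since $\binom{r}{r-1}=r$ when $d=3$, this is exactly the claimed bound.

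\textbf{Minor point in your equality analysis.} For $r=3$, $W$ has no entries beyond $B$, so "entries equal to both $1/q^m$ and $1/q$" does not apply directly. Instead, all entries of $B$ equal to $1/q^m$ forces all blocks to have length $1$. Hence $x=1/q^m$, and combined with $x=1/q$ this gives $m=1$, as in the paper.
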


\begin{proof}
Put $s=d+r$, $W=X_{s-2}$,
$x=\alpha_{s-1}/q^m$, and $y=\alpha_s/q^m$. Thus
$\lvert W\rvert=d+r-3$ and every entry of $W$ is at least $1/q^m$.
By \eqref{eq:er-gap},
\begin{equation}\label{eq:odd-refinement-gap}
 e_1(W)e_{r-1}(W)-r e_r(W)
 \geqslant\frac{r-1}{q^m}e_{r-1}(W).
\end{equation}
Also $C_{r-2}(W)\geqslant0$ under the single test, by
Theorem~\ref{thm:main} and \eqref{eq:normalized}.

Suppose first that $q\geqslant3$. Since $e_1(B)\leqslant1$ and
all normalized entries are at most $1/q$,
\[
 e_1(W)\leqslant1+\frac{r-3}{q},\qquad
 x+y\leqslant\frac2q.
\]
Consequently,
\[
 r(1-x-y)-e_1(W)\geqslant(r-1)\left(1-\frac3q\right).
\]
Adding this inequality, multiplied by $e_{r-1}(W)$, to
\eqref{eq:odd-refinement-gap} gives
\[
\begin{aligned}
 &(1-x-y)e_{r-1}(W)-e_r(W)\\
 &\quad\geqslant\frac{r-1}{r}
 \left(1-\frac3q+\frac{1}{q^m}\right)e_{r-1}(W).
\end{aligned}
\]
Using \eqref{eq:two-factor}, the nonnegativity of $C_{r-2}(W)$,
and the bound
\[
 e_{r-1}(W)\geqslant\binom{d+r-3}{r-1}q^{-m(r-1)},
\]
we obtain \eqref{eq:odd-layer-q-ge-3} from
\eqref{eq:normalized} and Lemma~\ref{lem:greedy}.

If equality holds for some $\boldsymbol{u}$, then it also holds
at $\boldsymbol{g}_s$ by Lemma~\ref{lem:greedy} and the bound
just proved. Thus $C_{r-2}(W)=0$, because
$(1-x)(1-y)>0$. Corollary~\ref{cor:strict} forces $r=3$;
hence $W=B$ and $e_1(B)=1$. Equality in the lower bound
for $e_2(B)$ forces every entry of $B$ to equal $1/q^m$.
Thus every block has length $1$, and $d=q^m$. Equality in
$x+y\leqslant2/q$ now requires $1/q^m=1/q$, so $m=1$
and $d=q$. Conversely, for these parameters the coefficient
formula gives
$\omega_q(\boldsymbol{u})=q(q-1)^2(q-2)/3$,
which equals the right-hand side of \eqref{eq:odd-layer-q-ge-3}.

Now suppose $q=2$ and $m\geqslant3$. First assume that $m\geqslant4$,
or that $m=3$ and none of $\alpha_{d+2}/8,\ldots,\alpha_s/8$ equals
$1/2$. In the first case, Lemma~\ref{lem:binary-odd} gives
$r(1-x-y)-e_1(W)\geqslant3/q^m$. In the second, the entries after
$B$ are at most $1/4$, so $e_1(W)\leqslant(r+1)/4$ and
$x+y\leqslant1/2$. Hence
\[
 r(1-x-y)-e_1(W)\geqslant\frac{r-1}{4}\geqslant\frac38.
\]
In either case, \eqref{eq:odd-refinement-gap} yields
\begin{equation}\label{eq:binary-refinement-step}
 (1-x-y)e_{r-1}(W)-e_r(W)
 \geqslant\frac{r+2}{rq^m}e_{r-1}(W).
\end{equation}
Combining \eqref{eq:binary-refinement-step} with
\eqref{eq:two-factor} gives
\[
 C_r(X_s)\geqslant\frac{r+2}{rq^m}e_{r-1}(W)
 \geqslant\frac{r+2}{rq^{mr}}\binom{d+r-3}{r-1}.
\]
The asserted bound follows from \eqref{eq:normalized} and
Lemma~\ref{lem:greedy}.

It remains to consider $m=3$ when one of
$\alpha_{d+2}/8,\ldots,\alpha_s/8$ equals $1/2$.
Lemma~\ref{lem:half-entry} gives $d=3$ and $n_1=3$.
Thus $W$ has $r$ entries, one of which is $\alpha_3/8=1/2$,
while all the others are at least $1/8$. Of the $r$ products
in $e_{r-1}(W)$, exactly $r-1$ contain this entry. Therefore
\[
 e_{r-1}(W)\geqslant(4r-3)8^{-(r-1)}
 \geqslant3r\,8^{-(r-1)}.
\]
Every adjacent pair satisfies $x+y\leqslant3/4$, and
Lemma~\ref{lem:complement} gives
$e_r(W)\leqslant e_{r-1}(W)/(2r)\leqslant e_{r-1}(W)/6$.
Equation~\eqref{eq:two-factor} consequently implies
\[
 C_r(X_s)\geqslant\frac1{12}e_{r-1}(W)
 \geqslant\frac{2r}{8^r}\geqslant\frac{r+2}{8^r}.
\]
Since $d=3$, this is the required bound after applying
\eqref{eq:normalized} and Lemma~\ref{lem:greedy}.
\end{proof}

The following bound holds for every $q$.

\begin{corollary}\label{cor:odd-bound}
Suppose $d\geqslant3$ and $S(\boldsymbol{g}_{d+1})\leqslant q^m+1$. For every odd
$r\geqslant3$ with $d+r\leqslant N$ and every $\boldsymbol{u}\in \mathcal I$
of total dimension $d+r$,
\begin{equation}\label{eq:odd-bound}
 \omega_d(\boldsymbol{u})\geqslant (q^m-1)\frac{r-1}{r}
       \binom{d+r-3}{r-1}.
\end{equation}
Equality holds if and only if $q=3$, $m=1$, $d=r=3$,
and $n_1=\cdots=n_t=1$ with $t\geqslant6$.
\end{corollary}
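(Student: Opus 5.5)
The plan is to derive Corollary~\ref{cor:odd-bound} from Proposition~\ref{prop:binary-odd-refinement}. The two cases are $q\geqslant3$ and $q=2$. A third case, $q=2$ with $m\leqslant2$, is not covered by that proposition and needs a separate argument.

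\emph{Case $q\geqslant3$.} Here the factor $1+(q-3)q^{m-1}$ in \eqref{eq:odd-layer-q-ge-3} is at least $1$. Hence \eqref{eq:odd-layer-q-ge-3} implies \eqref{eq:odd-bound}. Equality in \eqref{eq:odd-bound} forces two things:
\begin{itemize}
\item equality in \eqref{eq:odd-layer-q-ge-3}, which by Proposition~\ref{prop:binary-odd-refinement} means $m=1$, $d=q$, $r=3$ and all blocks of length $1$;
\item $(q-3)q^{m-1}=0$, which means $q=3$.
\end{itemize}
Thus $d=r=3$. The condition $d+r\leqslant N$ becomes $t=N\geqslant6$. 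For the converse, the value $q(q-1)^2(q-2)/3=4$ at $q=3$ equals $(q-1)\cdot\frac23\binom{3}{2}=4$, so equality holds.

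\emph{Case $q=2$, $m\geqslant3$.} Part (2) of the proposition gives the bound $(2^m-1)\frac{r+2}{r}\binom{d+r-3}{r-1}$. Since $\frac{r+2}{r}>\frac{r-1}{r}$ and the binomial coefficient is positive (as $d\geqslant3$), this bound is strictly larger than the one in \eqref{eq:odd-bound}. So there is no equality here.

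\emph{Case $q=2$, $m\leqslant2$.} With $d\geqslant3$, the single test already excludes $m=1$: it would force $d+1\leqslant3$, as noted in the proof of Proposition~\ref{prop:binary}. For $m=2$, the single test says $S(\boldsymbol g_{d+1})\leqslant5$. The blocks have length $1$ or $2$, and filling the longest blocks first with $d+1\geqslant4$ units gives $S\geqslant4$.
\begin{itemize}
\item If $n_1=2$, the first block contributes $3$. Any further units add at least $1$ each, and the second block, if of length $2$ and filled, adds $3$. Reaching $d+1\geqslant4$ therefore forces $S\geqslant6$ unless the blocks after the first have length $1$. In that subcase $S(\boldsymbol g_{d+1})=3+(d-1)\geqslant5$, so $d=3$.
\item If all $n_i=1$, then $S(\boldsymbol g_{d+1})=d+1\leqslant5$, so $d\leqslant4$.
\end{itemize}
In these few configurations the normalized entries lie in $\{1/4,1/2\}$, and one entry $1/2$ is possible.

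I would redo the inductive step directly. Put $W=X_{s-2}$ and use \eqref{eq:two-factor} together with \eqref{eq:odd-refinement-gap}, where $q^m=4$. This requires a bound $r(1-x-y)-e_1(W)\geqslant c$ with $c$ large enough to give
\[
(1-x-y)e_{r-1}(W)-e_r(W)\geqslant\frac{r-1}{4r}e_{r-1}(W)
\]
or better. The lower bound $e_{r-1}(W)\geqslant\binom{d+r-3}{r-1}4^{-(r-1)}$ then yields a constant at least $(r-1)/r$.

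I expect this small case, $q=2$, $m=2$, to be the main obstacle. The difficulty is the pair $x+y=3/4$ that occurs when blocks of length $2$ are present. If the general estimate fails there, I would evaluate $C_r(X_s)$ explicitly, since all remaining entries are $1/4$ and the count is a closed-form binomial sum. I would also check strictness, so that the equality characterization is unaffected.
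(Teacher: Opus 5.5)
Your handling of $q\geqslant3$, of $q=2$ with $m\geqslant3$, and your exclusion of $q=2$, $m=1$ match the paper's proof. That includes the converse check $4=4$ at $q=3$. The gap is the remaining case $q=2$, $m=2$, the only part of the corollary that does not follow directly from Proposition~\ref{prop:binary-odd-refinement}. You list the right ingredients: \eqref{eq:two-factor}, \eqref{eq:odd-refinement-gap} with $q^m=4$, and $e_{r-1}(W)\geqslant\binom{d+r-3}{r-1}4^{-(r-1)}$. But you never prove the lower bound on $r(1-x-y)-e_1(W)$ that these require. Instead you point to an anticipated obstacle ($x+y=3/4$) and to a fallback explicit evaluation that you do not carry out. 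As written, neither the inequality nor its strictness is established for $m=2$, and strictness is needed for the ``only if'' direction of the equality characterization.

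The missing step is short, and your own enumeration already contains the key fact. Under the single test with $d\geqslant3$, only $n_1$ can equal $2$. So the only normalized entry equal to $1/2$ is $\alpha_2/4$, which lies in $B$. The paper gets the same conclusion from Lemma~\ref{lem:half-entry}: an entry $1/2$ beyond position $d+1$ would force $d\leqslant m=2$. Since $s-1=d+r-1\geqslant5$, both $x$ and $y$ equal $1/4$, so the pair $x+y=3/4$ never arises. Also $e_1(W)\leqslant e_1(B)+(r-3)/4\leqslant1+(r-3)/4$, hence $r(1-x-y)-e_1(W)\geqslant(r-1)/4$. Adding \eqref{eq:odd-refinement-gap}, whose right side here is $\frac{r-1}{4}e_{r-1}(W)$, gives
\[
 (1-x-y)e_{r-1}(W)-e_r(W)\geqslant\frac{r-1}{2r}\,e_{r-1}(W).
\]
This is twice the constant you were aiming for. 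Combine it with $C_{r-2}(W)\geqslant0$ (from Theorem~\ref{thm:main} and \eqref{eq:normalized}) and \eqref{eq:two-factor}. The result is $\omega_d(\boldsymbol{g}_s)\geqslant2(q^m-1)\frac{r-1}{r}\binom{d+r-3}{r-1}$, and Lemma~\ref{lem:greedy} transfers this bound to every $\boldsymbol{u}$ of total dimension $s$. Thus the bound is strict for $q=2$, which is exactly how the paper concludes.
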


\begin{proof}
For $q\geqslant3$, Proposition~\ref{prop:binary-odd-refinement} gives
the bound because $1+(q-3)q^{m-1}\geqslant1$. Equality requires
$q=3$, and the equality conditions in that proposition are then
exactly those stated here.

Suppose $q=2$. The case $m=1$ is impossible: all blocks would have
length $1$, and the single test would give $d+1\leqslant3$,
contrary to $d\geqslant3$. For $m\geqslant3$, the binary bound in
Proposition~\ref{prop:binary-odd-refinement} is strictly larger than
the right-hand side of~\eqref{eq:odd-bound}, since $r+2>r-1$.

It remains to consider $m=2$. Lemma~\ref{lem:half-entry} excludes
an entry $1/2$ after the first $d+1$ positions, since such an entry
would force $d\leqslant2$. Set $s=d+r$, $W=X_{s-2}$,
$x=\alpha_{s-1}/4$, and $y=\alpha_s/4$. The entries after $B$
are therefore at most $1/4$, and
\[
 e_1(W)\leqslant1+\frac{r-3}{4},\qquad x+y\leqslant\frac12.
\]
Thus $r(1-x-y)-e_1(W)\geqslant(r-1)/4$.
Every entry of $W$ is at least $1/4$, so~\eqref{eq:er-gap} gives
\[
 e_1(W)e_{r-1}(W)-r e_r(W)
 \geqslant\frac{r-1}{4}e_{r-1}(W).
\]
Combining the last two inequalities yields
\[
 (1-x-y)e_{r-1}(W)-e_r(W)
 \geqslant\frac{r-1}{2r}e_{r-1}(W).
\]
By Theorem~\ref{thm:main} and~\eqref{eq:normalized},
$C_{r-2}(W)\geqslant0$. Since $\lvert W\rvert=d+r-3$, identity
\eqref{eq:two-factor} now gives
\[
 C_r(X_s)\geqslant\frac{r-1}{2r}e_{r-1}(W)
 \geqslant\frac{2(r-1)}{r\,4^r}\binom{d+r-3}{r-1}.
\]
After applying~\eqref{eq:normalized} and Lemma~\ref{lem:greedy},
this gives twice the positive right-hand side of~\eqref{eq:odd-bound}.
Hence equality is impossible when $q=2$.
\end{proof}

\subsection{Integrality, Zero Counts, and Minimizers}

\begin{lemma}\label{lem:integrality}
For every $\boldsymbol{u}\in\mathcal I$, the number
$\omega_d(\boldsymbol{u})/(q^m-1)$ is an integer. In particular, every
positive formal support count is at least $q^m-1$.
\end{lemma}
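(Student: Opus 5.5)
The plan is to read integrality straight off the coefficient formula of Lemma~\ref{lem:coefficient}. If $\lvert\boldsymbol{u}\rvert<d$, then $\omega_d(\boldsymbol{u})=0$ and there is nothing to prove. Otherwise put $s=\lvert\boldsymbol{u}\rvert\geqslant d$, and \eqref{eq:coefficient} gives
\[
 \frac{\omega_d(\boldsymbol{u})}{q^m-1}
 =[z^{s-d}]\frac{P_{\boldsymbol{u}}(z)}{(1-z)(1-q^mz)} .
\]
The polynomial $P_{\boldsymbol{u}}(z)=\prod_i\prod_{j<u_i}(1-q^jz)$ has integer coefficients. The factors $1/(1-z)=\sum_k z^k$ and $1/(1-q^mz)=\sum_k q^{mk}z^k$ are power series with integer coefficients. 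Their product is therefore a power series in $\mathbb Z[[z]]$, and every coefficient of it is an integer. In particular the coefficient of $z^{s-d}$ is an integer, so $q^m-1$ divides $\omega_d(\boldsymbol{u})$.

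One can also check this directly from the convolution written in the proof of Lemma~\ref{lem:coefficient}. Each term $(q^{m(\ell-d+1)}-1)f_\ell(\boldsymbol{u})$ with $\ell\geqslant d$ is divisible by $q^m-1$. This holds because $q^m-1$ divides $q^{mk}-1$ for every $k\geqslant1$, and because $f_\ell(\boldsymbol{u})$ is an integer: it is a signed sum of products of powers of $q$ and Gaussian binomials, which count subspaces and are therefore integers. This route avoids power series entirely, and I will mention it as a one-line alternative.

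The second assertion follows at once: an integer multiple of $q^m-1$ that is positive is at least $q^m-1$, since $q^m-1\geqslant1$.

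There is no real obstacle here. The only point to be careful about is the case $s<d$, where formula \eqref{eq:coefficient} is not asserted; that case is handled separately by the vanishing $\omega_d(\boldsymbol{u})=0$ from Lemma~\ref{lem:coefficient}.
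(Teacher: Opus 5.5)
Your proof is correct and is essentially the paper's: the paper uses your one-line alternative, writing $(q^{m(\ell-d+1)}-1)/(q^m-1)=\sum_{a=0}^{\ell-d}q^{ma}$ and invoking the integrality of the $q$-binomial coefficients in $f_\ell(\boldsymbol{u})$. Your main route through \eqref{eq:coefficient} is the same computation written as a power series with integer coefficients, and your separate handling of $\lvert\boldsymbol{u}\rvert<d$ correctly covers the case where that formula is not asserted.
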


\begin{proof}
The definition and the integrality of $q$-binomial coefficients give
\[
 \frac{\omega_d(\boldsymbol{u})}{q^m-1}
 =\sum_{\ell=d}^{\lvert \boldsymbol{u}\rvert}\sum_{a=0}^{\ell-d}q^{ma} f_\ell(\boldsymbol{u})
 \in\mathbb Z.\qedhere
\]
\end{proof}

We call $\boldsymbol{v}\in\mathcal I$ a \emph{feasible permutation} of
$\boldsymbol{w}\in\mathcal I$ if $\boldsymbol{v}$ is obtained by
permuting the coordinates of $\boldsymbol{w}$. The next proposition
determines the zero counts of total dimension $d+1$ and the equality
cases in Lemma~\ref{lem:greedy}.

\begin{proposition}\label{prop:minimizers}
Suppose $S(\boldsymbol{g}_{d+1})\leqslant q^m+1$, and let
$\boldsymbol{u}\in\mathcal I$.
\begin{enumerate}[label=(\arabic*)]
\item\label{it:zero-locus} If $\lvert\boldsymbol{u}\rvert=d+1$, then
$\omega_d(\boldsymbol{u})=0$ if and only if
$S(\boldsymbol{g}_{d+1})=q^m+1$ and $\boldsymbol{u}$ is a feasible
permutation of $\boldsymbol{g}_{d+1}$.
\item\label{it:min-large} If $\lvert\boldsymbol{u}\rvert=s\geqslant d+3$,
then $\omega_d(\boldsymbol{u})=\omega_d(\boldsymbol{g}_s)$ if and only if
$\boldsymbol{u}$ is a feasible permutation of $\boldsymbol{g}_s$.
\item\label{it:min-d2} Let $\lvert\boldsymbol{u}\rvert=d+2$. If
$S(\boldsymbol{g}_{d+1})<q^m+1$, the conclusion of~\ref{it:min-large}
holds with $s=d+2$. If $S(\boldsymbol{g}_{d+1})=q^m+1$, let $a$ be the
index of the last positive coordinate of $\boldsymbol{g}_{d+1}$. Then
$\omega_d(\boldsymbol{u})=\omega_d(\boldsymbol{g}_{d+2})$ if and only if
$\boldsymbol{u}$ is a feasible permutation of $\boldsymbol{g}_{d+2}$, or
$a<t$, $1\leqslant(\boldsymbol{g}_{d+1})_a<n_a$, and $\boldsymbol{u}$
is a feasible permutation of $\boldsymbol{g}_{d+1}+\mathbf e_{a+1}$.
\end{enumerate}
\end{proposition}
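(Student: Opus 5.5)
For \ref{it:zero-locus} I will use Lemma~\ref{lem:coefficient} together with Lemma~\ref{lem:S-greedy}. If $\lvert\boldsymbol{u}\rvert=d+1$, then $\omega_d(\boldsymbol{u})=(q^m-1)(q^m+1-S(\boldsymbol{u}))$. Moreover $S(\boldsymbol{u})\leqslant S(\boldsymbol{g}_{d+1})\leqslant q^m+1$, and the first inequality is an equality exactly for feasible permutations of $\boldsymbol{g}_{d+1}$. Hence $\omega_d(\boldsymbol{u})=0$ forces equality in both inequalities, which is the claim. The converse is immediate from the same chain.

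For \ref{it:min-large} and \ref{it:min-d2} I will follow the chain of moves in the proof of Lemma~\ref{lem:greedy}, starting from a given $\boldsymbol{u}$ with $\lvert\boldsymbol{u}\rvert=s\geqslant d+2$. Interchanges preserve $P_{\boldsymbol{u}}$ and hence $\omega_d$, so feasible permutations of $\boldsymbol{g}_s$ attain $\omega_d(\boldsymbol{g}_s)$; this gives the ``if'' directions for $\boldsymbol{g}_s$. Each transfer $\boldsymbol{u}\mapsto\boldsymbol{u}'$ changes $\omega_d$ by $-(q^{u_i}-q^{u_j-1})\,\omega_d(\boldsymbol{w})$, where $\boldsymbol{w}=\boldsymbol{u}-\mathbf e_j$ and the coefficient is strictly positive. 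All counts in dimension $s-1$ are nonnegative by Theorem~\ref{thm:main}, so along the chain $\omega_d(\boldsymbol{u})\geqslant\omega_d(\boldsymbol{u}')\geqslant\cdots\geqslant\omega_d(\boldsymbol{g}_s)$. Equality $\omega_d(\boldsymbol{u})=\omega_d(\boldsymbol{g}_s)$ therefore holds if and only if $\omega_d(\boldsymbol{w})=0$ for every transfer in the chain.

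If $s\geqslant d+3$, then $\lvert\boldsymbol{w}\rvert\geqslant d+2$, and Corollary~\ref{cor:strict} gives $\omega_d(\boldsymbol{w})>0$. The exception is $(q,m,d)=(2,1,2)$; there all blocks have length $1$, so every $\boldsymbol{u}$ of dimension $s$ is a feasible permutation of $\boldsymbol{g}_s$ and the claim is trivial. Thus, outside the exception, equality forces the chain to contain no transfer. Then the sorted rearrangement of $\boldsymbol{u}$ is $\boldsymbol{g}_s$, which proves \ref{it:min-large}. The same argument gives \ref{it:min-d2} when $S(\boldsymbol{g}_{d+1})<q^m+1$, since by \ref{it:zero-locus} all counts in dimension $d+1$ are then positive.

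The main obstacle is the case $s=d+2$ with $S(\boldsymbol{g}_{d+1})=q^m+1$, where a transfer may have $\omega_d(\boldsymbol{w})=0$. After sorting, assume $\boldsymbol{u}\ne\boldsymbol{g}_{d+2}$ is nonincreasing. The first transfer uses $\boldsymbol{w}=\boldsymbol{u}-\mathbf e_j$, with $j$ the last positive coordinate. This $\boldsymbol{w}$ is nonincreasing, so by \ref{it:zero-locus} the condition $\omega_d(\boldsymbol{w})=0$ forces $\boldsymbol{w}=\boldsymbol{g}_{d+1}$. Hence $\boldsymbol{u}=\boldsymbol{g}_{d+1}+\mathbf e_j$, and sortedness together with $j$ being the last positive coordinate leaves only $j=a$ or $j=a+1$. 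If $j=a$ and $(\boldsymbol{g}_{d+1})_a<n_a$, then $\boldsymbol{u}=\boldsymbol{g}_{d+2}$, which was excluded. If $j=a+1$ and $(\boldsymbol{g}_{d+1})_a=n_a$, then again $\boldsymbol{u}=\boldsymbol{g}_{d+2}$. The only genuinely new vector is $\boldsymbol{g}_{d+1}+\mathbf e_{a+1}$ with $a<t$ and $1\leqslant(\boldsymbol{g}_{d+1})_a<n_a$.

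Conversely, for this vector the first coordinate below capacity is $i=a$, and the transfer from $a+1$ to $a$ lands directly on $\boldsymbol{g}_{d+2}$. Its $\boldsymbol{w}$ is $\boldsymbol{g}_{d+1}$, whose count is zero, so the whole chain is a single transfer with zero change. This gives the stated equality and completes the characterization in \ref{it:min-d2}.
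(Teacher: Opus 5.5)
Your proposal is correct and follows the paper's proof essentially step for step. Part (1) goes through Lemma~\ref{lem:coefficient} and Lemma~\ref{lem:S-greedy}, and part (2) and the strict case of (3) use the exchange identity together with Corollary~\ref{cor:strict}. In the equality case of (3) you show that the first transfer must have $\omega_d(\boldsymbol{u}-\mathbf e_j)=0$, which forces $\boldsymbol{u}-\mathbf e_j=\boldsymbol{g}_{d+1}$. The one subcase you do not mention, $j=a$ with $(\boldsymbol{g}_{d+1})_a=n_a$, cannot occur because $\boldsymbol{u}\in\mathcal I$.
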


\begin{proof}
\ref{it:zero-locus} By Lemma~\ref{lem:coefficient},
$\omega_d(\boldsymbol{u})=(q^m-1)(q^m+1-S(\boldsymbol{u}))$, and
$S(\boldsymbol{u})\leqslant S(\boldsymbol{g}_{d+1})\leqslant q^m+1$.
By Lemma~\ref{lem:S-greedy}, the first inequality is an equality only
for feasible permutations of $\boldsymbol{g}_{d+1}$.

\ref{it:min-large} In the proof of Lemma~\ref{lem:greedy}, each
transfer changes $\omega_d$ by
$-(q^{u_i}-q^{u_j-1})\omega_d(\boldsymbol{w})$, where
$\lvert\boldsymbol{w}\rvert=s-1\geqslant d+2$. By
Corollary~\ref{cor:strict}, $\omega_d(\boldsymbol{w})>0$ unless
$(q,m,d)=(2,1,2)$, so every transfer strictly decreases $\omega_d$. In
the exceptional case $(q,m,d)=(2,1,2)$ all block lengths are $1$, so
every vector of total dimension $s$ is a feasible permutation of
$\boldsymbol{g}_s$.

\ref{it:min-d2} If $S(\boldsymbol{g}_{d+1})<q^m+1$, all counts of
total dimension $d+1$ are positive by~\ref{it:zero-locus}, so the
argument for~\ref{it:min-large} applies with $s=d+2$.

Suppose $S(\boldsymbol{g}_{d+1})=q^m+1$. For necessity, take a
minimizer $\boldsymbol{u}$ that is not a feasible permutation of
$\boldsymbol{g}_{d+2}$, and arrange its coordinates in nonincreasing
order. Let $i$ be the first unfilled coordinate and $j$ the last
positive coordinate. The transfer from $j$ to $i$ in
Lemma~\ref{lem:greedy} leaves the count unchanged by minimality.
Its coefficient $q^{u_i}-q^{u_j-1}$ is positive, so the exchange
identity gives $\omega_d(\boldsymbol{u}-\mathbf e_j)=0$.
Removing the last positive unit preserves the nonincreasing order.
By~\ref{it:zero-locus}, the resulting vector is a feasible
permutation of $\boldsymbol{g}_{d+1}$. Both vectors are
nonincreasing, so
$\boldsymbol{u}-\mathbf e_j=\boldsymbol{g}_{d+1}$.

If $u_j\geqslant2$, then $j=a$, since deleting one unit leaves the
last positive coordinate unchanged. Thus
$\boldsymbol{u}=\boldsymbol{g}_{d+1}+\mathbf e_a
=\boldsymbol{g}_{d+2}$, a contradiction. Hence $u_j=1$ and $j>a$;
the nonincreasing order forces $j=a+1$. If
$(\boldsymbol{g}_{d+1})_a=n_a$, then
$\boldsymbol{u}=\boldsymbol{g}_{d+1}+\mathbf e_{a+1}
=\boldsymbol{g}_{d+2}$, again a contradiction. Thus $a<t$,
$1\leqslant(\boldsymbol{g}_{d+1})_a<n_a$, and
$\boldsymbol{u}=\boldsymbol{g}_{d+1}+\mathbf e_{a+1}$.

Conversely, the transfer from $\boldsymbol{g}_{d+1}+\mathbf e_{a+1}$
to $\boldsymbol{g}_{d+2}=\boldsymbol{g}_{d+1}+\mathbf e_a$ leaves
$\omega_d$ unchanged, since its difference is a multiple of
$\omega_d(\boldsymbol{g}_{d+1})=0$.
\end{proof}

\subsection{Arithmetic Properties}

The equality case $S(\boldsymbol{g}_{d+1})=q^m+1$ in the single test is
arithmetically restricted.

\begin{proposition}\label{prop:equality-case}
Let $h$ be the number of positive coordinates of
$\boldsymbol{g}_{d+1}$. If $S(\boldsymbol{g}_{d+1})=q^m+1$, then
\[
 h\equiv1\pmod q,\qquad h\geqslant q+1,\qquad
 d\equiv1\pmod{q-1}.
\]
In particular, if $t\leqslant q$ or $d<q$, then
$S(\boldsymbol{g}_{d+1})\leqslant q^m+1$ implies
$S(\boldsymbol{g}_{d+1})<q^m+1$, and all formal support counts with
$\lvert \boldsymbol{u}\rvert\geqslant d$ are then positive.
\end{proposition}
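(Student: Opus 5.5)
We work modulo $q$ and modulo $q-1$ with the explicit form of $S(\boldsymbol{g}_{d+1})$. Write $\boldsymbol{g}_{d+1}=(n_1,\ldots,n_{h-1},\delta,0,\ldots,0)$ with $\delta\geqslant1$, so $h$ is the number of positive coordinates. Then
\[
 (q-1)S(\boldsymbol{g}_{d+1})=\sum_{i<h}q^{n_i}+q^{\delta}-h .
\]

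\textbf{Step 1: $h\equiv1\pmod q$.} Reduce $S(\boldsymbol{g}_{d+1})=\sum_{i}\genfrac{[}{]}{0pt}{}{u_i}{1}_q$ modulo $q$. Each positive $u_i$ gives $1+q+\cdots+q^{u_i-1}\equiv1\pmod q$, so $S(\boldsymbol{g}_{d+1})\equiv h\pmod q$. Since $m\geqslant1$, we have $q^m+1\equiv1\pmod q$, and the equality $S(\boldsymbol{g}_{d+1})=q^m+1$ gives $h\equiv1\pmod q$.

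\textbf{Step 2: $h\geqslant q+1$.} Step 1 gives $h=1$ or $h\geqslant q+1$. To exclude $h=1$, note that it means $\boldsymbol{g}_{d+1}=(\delta,0,\ldots,0)$ with $\delta\leqslant n_1\leqslant m$. Then $S=(q^\delta-1)/(q-1)\leqslant(q^m-1)/(q-1)\leqslant q^m-1<q^m+1$, a contradiction.

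\textbf{Step 3: $d\equiv1\pmod{q-1}$.} Reduce modulo $q-1$, using $q\equiv1$. Each positive $u_i$ gives $\genfrac{[}{]}{0pt}{}{u_i}{1}_q\equiv u_i\pmod{q-1}$, so $S(\boldsymbol{g}_{d+1})\equiv|\boldsymbol{g}_{d+1}|=d+1$. Also $q^m+1\equiv2$. Hence $d+1\equiv2$, that is, $d\equiv1\pmod{q-1}$. This is vacuous for $q=2$, and no case split is needed.

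\textbf{Step 4: the consequences.} If $t\leqslant q$, then $h\leqslant t\leqslant q$, which contradicts Step 2. If $d<q$, then $h\leqslant d+1\leqslant q$, since each positive coordinate is at least $1$ and they sum to $d+1$; this again contradicts Step 2. In either case the single test therefore holds strictly. Corollary~\ref{cor:strict} then gives positivity for $|\boldsymbol{u}|\geqslant d+2$, except when $(q,m,d)=(2,1,2)$. Lemma~\ref{lem:coefficient} gives the value $q^m-1>0$ at $|\boldsymbol{u}|=d$. At $|\boldsymbol{u}|=d+1$ the value is $(q^m-1)(q^m+1-S(\boldsymbol{u}))$, which is positive because $S(\boldsymbol{u})\leqslant S(\boldsymbol{g}_{d+1})<q^m+1$ by Lemma~\ref{lem:S-greedy}.

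\textbf{Main obstacle: the exceptional case $(q,m,d)=(2,1,2)$.} In this case all blocks have length $1$ and $S(\boldsymbol{g}_3)=3=q^m+1$, so equality holds and the strict case never arises. Indeed, $d=2<q=2$ fails, and $t\leqslant2$ is impossible because $d<N=t$. So the exception does not interfere, and it remains only to check these inequalities with care.
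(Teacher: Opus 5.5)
Your proof is correct and follows essentially the same route as the paper's. You reduce $S(\boldsymbol{g}_{d+1})$ modulo $q$ and modulo $q-1$ via $\genfrac{[}{]}{0pt}{}{a}{1}_q=1+q+\cdots+q^{a-1}$, exclude $h=1$ by the bound $\genfrac{[}{]}{0pt}{}{m}{1}_q<q^m+1$, bound $h$ by $t$ or $d+1$, and invoke Corollary~\ref{cor:strict} with the case $(2,1,2)$ ruled out; your explicit use of Lemma~\ref{lem:S-greedy} in total dimension $d+1$ only spells out what the paper leaves implicit.
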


\begin{proof}
For $a\geqslant1$, the identity
$\genfrac{[}{]}{0pt}{}{a}{1}_q=1+q+\cdots+q^{a-1}$ shows that this
coefficient is congruent to $1$ modulo $q$ and to $a$ modulo $q-1$.
The positive coordinates of $\boldsymbol{g}_{d+1}$ sum to $d+1$, so
$S(\boldsymbol{g}_{d+1})\equiv h\pmod q$ and
$S(\boldsymbol{g}_{d+1})\equiv d+1\pmod{q-1}$. Under the assumed
equality $S(\boldsymbol{g}_{d+1})=q^m+1$, these give
$h\equiv1\pmod q$ and $d\equiv1\pmod{q-1}$.
A single positive coordinate contributes at most
$\genfrac{[}{]}{0pt}{}{m}{1}_q<q^m+1$, so $h>1$. Hence
$h\geqslant q+1$.

If $t\leqslant q$, then $h\leqslant t\leqslant q$; if $d<q$, then
$h\leqslant d+1\leqslant q$. Thus equality cannot hold in either
case. When the single test holds, its inequality is therefore strict.
Corollary~\ref{cor:strict} now gives positivity of every count with
$\lvert\boldsymbol{u}\rvert\geqslant d$. Its exceptional case
$(q,m,d)=(2,1,2)$ is excluded, since then $d=q$ and
$t=N\geqslant d+1>q$.
\end{proof}

The formal support counts also satisfy congruences.

\begin{proposition}\label{prop:congruences}
Let $\boldsymbol{u}\in\mathcal I$ with
$s=\lvert \boldsymbol{u}\rvert\geqslant d$. Write $h(\boldsymbol{u})$
for the number of positive coordinates of $\boldsymbol{u}$, and
interpret an ordinary binomial coefficient with lower index out of
range as zero. Then
\[
 \frac{\omega_d(\boldsymbol{u})}{q^m-1}\equiv
 (-1)^{s-d}\binom{h(\boldsymbol{u})-1}{s-d}\pmod q.
\]
If moreover $d\geqslant2$, then
\[
 \frac{\omega_d(\boldsymbol{u})}{q^m-1}\equiv
 (-1)^{s-d}\binom{s-2}{d-2}\pmod{q-1}.
\]
Consequently, for $d\geqslant2$ and any $\boldsymbol{u},\boldsymbol{v}\in \mathcal I$ of
the same total dimension $s\geqslant d$,
\[
 \omega_d(\boldsymbol{u})-\omega_d(\boldsymbol{v})\in(q-1)(q^m-1)\mathbb Z,
\]
and if $\boldsymbol{u}$ and $\boldsymbol{v}$ have in addition the same
number of positive coordinates, their counts are congruent modulo
$q(q-1)(q^m-1)$.
\end{proposition}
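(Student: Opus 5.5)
The plan is to work entirely with the coefficient formula of Lemma~\ref{lem:coefficient} and reduce it modulo $q$ and modulo $q-1$ inside the ring $\mathbb Z[[z]]$. Since $s\geqslant d$, \eqref{eq:coefficient} gives
\[
 \frac{\omega_d(\boldsymbol{u})}{q^m-1}=[z^{s-d}]\,\frac{P_{\boldsymbol{u}}(z)}{(1-z)(1-q^mz)}.
\]
Both $P_{\boldsymbol{u}}(z)$ and $1/((1-z)(1-q^mz))$ have integer coefficients. Reduction of coefficients modulo an integer $M$ is a ring homomorphism $\mathbb Z[[z]]\to(\mathbb Z/M)[[z]]$, so we may reduce each factor separately and then read off the coefficient of $z^{s-d}$.

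\emph{Modulo $q$.} Here $q^m\equiv0$ and $q^j\equiv0$ for $j\geqslant1$. In $P_{\boldsymbol{u}}(z)=\prod_i\prod_{j=0}^{u_i-1}(1-q^jz)$, only the factor with $j=0$ survives in each block with $u_i\geqslant1$, giving
\[
 P_{\boldsymbol{u}}(z)\equiv(1-z)^{h(\boldsymbol{u})},\qquad \frac{1}{1-q^mz}\equiv1 .
\]
Since $s\geqslant d\geqslant1$, we have $h(\boldsymbol{u})\geqslant1$, so the factor $1-z$ cancels. The quotient is then congruent to $(1-z)^{h(\boldsymbol{u})-1}$, whose coefficient of $z^{s-d}$ is $(-1)^{s-d}\binom{h(\boldsymbol{u})-1}{s-d}$. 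This coefficient is zero when $s-d>h(\boldsymbol{u})-1$, matching the stated convention.

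\emph{Modulo $q-1$.} Now $q\equiv1$, so $P_{\boldsymbol{u}}(z)\equiv(1-z)^s$ and $1/((1-z)(1-q^mz))\equiv(1-z)^{-2}$. Because $s\geqslant d\geqslant2$, the product is the polynomial $(1-z)^{s-2}$. Its coefficient of $z^{s-d}$ is
\[
 (-1)^{s-d}\binom{s-2}{s-d}=(-1)^{s-d}\binom{s-2}{d-2}.
\]
When $q=2$ this congruence is modulo $1$ and therefore vacuous, so no special care is needed.

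\emph{Consequences.} For $d\geqslant2$, the right side of the modulo-$(q-1)$ congruence depends only on $s$. Hence for $\boldsymbol{u},\boldsymbol{v}$ with $|\boldsymbol{u}|=|\boldsymbol{v}|=s$, the difference $\omega_d(\boldsymbol{u})/(q^m-1)-\omega_d(\boldsymbol{v})/(q^m-1)$ is an integer (by Lemma~\ref{lem:integrality}) divisible by $q-1$. Multiplying by $q^m-1$ gives the first claim. If in addition $h(\boldsymbol{u})=h(\boldsymbol{v})$, the modulo-$q$ congruence shows this integer difference is also divisible by $q$. Since $\gcd(q,q-1)=1$, it is divisible by $q(q-1)$, and multiplying by $q^m-1$ gives divisibility of $\omega_d(\boldsymbol{u})-\omega_d(\boldsymbol{v})$ by $q(q-1)(q^m-1)$.

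The only point that needs some care is justifying the coefficientwise reduction of the rational function. This is unproblematic because $(1-z)(1-q^mz)$ has constant term $1$, so its inverse lies in $\mathbb Z[[z]]$ and reduction commutes with taking inverses.
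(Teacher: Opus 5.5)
Your proof is correct and follows the paper's argument essentially step for step. You reduce the integer power series in \eqref{eq:coefficient} modulo $q$ to $(1-z)^{h(\boldsymbol{u})-1}$ and modulo $q-1$ to $(1-z)^{s-2}$, read off the coefficient of $z^{s-d}$, and combine the two congruences using $\gcd(q,q-1)=1$. Your remark that $(1-z)(1-q^mz)$ is invertible in $\mathbb Z[[z]]$ makes explicit a point the paper leaves implicit, as does your note that the $q=2$ case is vacuous.
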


\begin{proof}
Reduce the integer-coefficient power series in~\eqref{eq:coefficient}.
Modulo $q$, the factor $1-q^jz$ becomes $1$ for $j\geqslant1$,
so $P_{\boldsymbol{u}}(z)\equiv(1-z)^{h(\boldsymbol{u})}$.
Since $s\geqslant d\geqslant1$ gives $h(\boldsymbol{u})\geqslant1$,
and $q^m\equiv0$, the fraction reduces to
$(1-z)^{h(\boldsymbol{u})-1}$.
For $d\geqslant2$, reduction modulo $q-1$ instead sends every
$1-q^jz$ and $1-q^mz$ to $1-z$; hence the fraction reduces to
$(1-z)^{s-2}$. Coefficients of $z^{s-d}$ give the two congruences.

The second congruence depends only on $s$ and $d$, so equal total
dimensions yield divisibility by $(q-1)(q^m-1)$. When the numbers
of positive coordinates also agree, the first congruence gives
the factor $q$ as well. Since $q$ and $q-1$ are coprime, the final
modulus follows.
\end{proof}

\begin{corollary}\label{cor:layer-gap}
Suppose $S(\boldsymbol{g}_{d+1})\leqslant q^m+1$. For every
$\boldsymbol{u}\in\mathcal I$ with $\lvert\boldsymbol{u}\rvert=d+1$
that is not a feasible permutation of $\boldsymbol{g}_{d+1}$,
\[
 \omega_d(\boldsymbol{u})\geqslant (q^m-1)
 \bigl(q^m+1-S(\boldsymbol{g}_{d+1})+q-1\bigr).
\]
The term $q-1$ cannot be improved when
$S(\boldsymbol{g}_{d+1})=q^m+1$.
\end{corollary}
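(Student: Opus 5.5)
The plan is to reduce everything to a congruence for $S$ modulo $q-1$, combined with the strict inequality of Lemma~\ref{lem:S-greedy}. Since $\lvert\boldsymbol{u}\rvert=d+1$, Lemma~\ref{lem:coefficient} gives
\[
 \omega_d(\boldsymbol{u})=(q^m-1)(q^m+1-S(\boldsymbol{u}))
 =(q^m-1)\bigl(q^m+1-S(\boldsymbol{g}_{d+1})+(S(\boldsymbol{g}_{d+1})-S(\boldsymbol{u}))\bigr).
\]
Because $q^m-1>0$, the bound is equivalent to $S(\boldsymbol{g}_{d+1})-S(\boldsymbol{u})\geqslant q-1$ for every $\boldsymbol{u}$ of total dimension $d+1$ that is not a feasible permutation of $\boldsymbol{g}_{d+1}$. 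The hypothesis $S(\boldsymbol{g}_{d+1})\leqslant q^m+1$ plays no role in this step; it only frames the statement.

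First, Lemma~\ref{lem:S-greedy} gives $S(\boldsymbol{u})<S(\boldsymbol{g}_{d+1})$ strictly, since $\boldsymbol{u}$ is not a feasible permutation of $\boldsymbol{g}_{d+1}$. Second, as in the proof of Proposition~\ref{prop:equality-case}, for every integer $a\geqslant0$ we have
\[
 \genfrac{[}{]}{0pt}{}{a}{1}_q=1+q+\cdots+q^{a-1}\equiv a\pmod{q-1},
\]
which also holds for $a=0$. Hence $S(\boldsymbol{v})\equiv\lvert\boldsymbol{v}\rvert\pmod{q-1}$ for every $\boldsymbol{v}\in\mathcal I$. In particular $S(\boldsymbol{u})\equiv d+1\equiv S(\boldsymbol{g}_{d+1})\pmod{q-1}$. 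A positive integer divisible by $q-1$ is at least $q-1$, which proves the inequality. For $q=2$ the claim is simply integrality.

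For sharpness, I would give an explicit family for each $q$ in which $S(\boldsymbol{g}_{d+1})=q^m+1$ and some $\boldsymbol{u}$ achieves difference exactly $q-1$. The natural source is a single reversed transfer that moves a unit from a coordinate of value $2$ to an empty coordinate of value $0$. This changes $S$ by $q^1-q^0=q-1$. Concretely, take $m=2$, $n_1=2$ and $n_2=\cdots=n_t=1$ with $t\geqslant q^2-q+2$, and set $d+1=q^2-q+2$. Then $N=t+1\geqslant d+2$, so $1\leqslant d<N$, and
\[
 \boldsymbol{g}_{d+1}=(2,1,\ldots,1,0,\ldots)\quad\text{with } q^2-q \text{ ones},
\]
so that $S(\boldsymbol{g}_{d+1})=(q+1)+(q^2-q)=q^2+1=q^m+1$. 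Now let $\boldsymbol{u}=(1,1,\ldots,1,0,\ldots)$ with $q^2-q+2$ ones. It has total dimension $d+1$, it is not a permutation of $\boldsymbol{g}_{d+1}$ because no coordinate equals $2$, and $S(\boldsymbol{u})=q^2-q+2=S(\boldsymbol{g}_{d+1})-(q-1)$. Hence equality holds, with $\omega_d(\boldsymbol{u})=(q^m-1)(q-1)$.

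There is no serious obstacle in this argument. The only care needed is in the sharpness construction: one must check that the chosen parameters satisfy $m\geqslant n_1\geqslant\cdots\geqslant n_t$ and $d<N$, and that $\boldsymbol{u}\in\mathcal I$.
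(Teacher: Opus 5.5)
Your proof is correct and follows the paper's argument: Lemma~\ref{lem:coefficient} gives the difference $(q^m-1)(S(\boldsymbol{g}_{d+1})-S(\boldsymbol{u}))$, the congruence $S(\boldsymbol{v})\equiv\lvert\boldsymbol{v}\rvert\pmod{q-1}$ makes it a multiple of $(q-1)(q^m-1)$, and Lemma~\ref{lem:S-greedy} makes it positive. For sharpness, the paper gives only the single instance $q=3$, $m=2$, $(n_1,\ldots,n_5)=(2,2,1,1,1)$, $d=5$, whereas your family ($m=2$, $n_1=2$, all other blocks of length $1$, $d+1=q^2-q+2$) checks out and shows that equality is attained for every $q$.
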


\begin{proof}
By Lemma~\ref{lem:coefficient},
$\omega_d(\boldsymbol{u})-\omega_d(\boldsymbol{g}_{d+1})
=(q^m-1)(S(\boldsymbol{g}_{d+1})-S(\boldsymbol{u}))$. Since
$S(\boldsymbol{v})\equiv\lvert\boldsymbol{v}\rvert\pmod{q-1}$ for every
$\boldsymbol{v}\in\mathcal I$, this difference is a multiple of
$(q-1)(q^m-1)$; it is positive by Lemma~\ref{lem:S-greedy}, because
$\boldsymbol{u}$ is not a feasible permutation of
$\boldsymbol{g}_{d+1}$. This gives the bound. For
$q=3$, $m=2$, $(n_1,\ldots,n_5)=(2,2,1,1,1)$, $d=5$, and
$\boldsymbol{u}=(2,1,1,1,1)$, the right side and
$\omega_d(\boldsymbol{u})$ both equal $16$.
\end{proof}


\end{document}